\begin{document}

\title{Joint Power Allocation and Beamforming for In-band Full-duplex Multi-cell Multi-user Networks}

\author{Haifeng Luo, 
        Navneet Garg, \IEEEmembership{Member, IEEE},
        Mark Holm,
        and Tharmalingam Ratnarajah, \IEEEmembership{Senior Member, IEEE}
\thanks{H. Luo, N. Garg, and T. Ratnarajah are with Institute for Digital Communications, School of Engineering, The University of Edinburgh, UK, e-mails:\{ s1895225, navneet.garg, t.ratnarajah\}@ed.ac.uk }
\thanks{M. Holm is with Huawei Technologies (Sweden) AB, Gothenburg, Sweden, e-mail: mark.holm@huawei.com}
}

\markboth{IEEE Transactions on Vehicular Technology, Draft}
{}

\maketitle

\begin{abstract}
This paper investigates a robust joint power allocation and beamforming scheme for in-band full-duplex multi-cell multi-user (IBFD-MCMU) networks. A mean-squared error (MSE) minimization problem is formulated with constraints on the power budgets and residual self-interference (RSI) power. The problem is not convex, so we decompose it into two sub-problems: interference management beamforming and power allocation, and give closed-form solutions to the sub-problems. Then we propose an iterative algorithm to yield an overall solution. The computational complexity and convergence behavior of the algorithm are analyzed. Our method can enhance the analog self-interference cancellation (ASIC) depth provided by the precoder with less effect on the downlink communication than the existing null-space projection method, inspiring a low-cost but efficient IBFD transceiver design. It can achieve $42.9\%$ of IBFD gain in terms of spectral efficiency with only antenna isolation, while this value increases to $60.9\%$ with further digital self-interference cancellation (DSIC). Numerical results illustrate that our algorithm is robust to hardware impairments and channel uncertainty. With sufficient ASIC depth, our method reduces the computation time by at least $20\%$ than the existing scheme due to its faster convergence speed at the cost of $<12.5\%$ sum rate loss. The benefit is much more significant with single-antenna users that our algorithm saves at least $40\%$ of the computation time at the cost of $<10\%$ sum rate reduction.
\end{abstract}
\vspace{-3mm}

\begin{IEEEkeywords}
   analog self-interference cancellation, in-band full-duplex, joint power allocation and beamforming, multi-cell multi-user
\end{IEEEkeywords}

\IEEEpeerreviewmaketitle

\section{Introduction}
\label{sec:intro}
\IEEEPARstart{I}{n}-band full-duplex (IBFD) is a promising candidate in beyond fifth-generation (5G) systems due to its ability to enhance spectral efficiency (SE) and reduce the end-to-end transmission delay compared to conventional half-duplex (HD) systems since the bandwidth is used simultaneously by uplink and downlink transmissions. However, IBFD radios introduce additional interferences, i.e., self-interference (SI) and co-channel interference (CCI), degrading the promising gain of IBFD, especially the significant SI due to the proximity of the transceiver \cite{surveySic}. In IBFD multi-cell multi-user (MCMU) networks, downlink users not only get interference from base stations as in traditional HD radios but also get interference from uplink users. Meanwhile, the uplink communication is interfered by downlink transmissions. The complex interference must be appropriately processed to achieve substantial system throughput.

It is revealed that, in multi-input and multi-output (MIMO) systems, the interference is caused by the non-zero inner product between the channel and transmitted signal matrices. Therefore, beamforming can be leveraged to cancel the interference effectively. The readers are referred to \cite{overviewMimo} for details of implementing interference cancellation by adjusting the weights on different antennas from the perspective of signal processing. Interference cancellation in IBFD-MCMU networks is challenging due to the contradiction between a large amount of interference and the limited degrees of freedom imposed by finite numbers of antennas. Nevertheless, an appropriate beamforming design could be leveraged to effectively utilize the limited degrees of freedom and maximize spectral efficiency \cite{luo2023beam}. Authors of \cite{acMse} minimize the sum of mean-squared errors (MSE) of an IBFD multi-user network through beamforming, and authors of \cite{bfFd} maximize the sum rate of such a network by optimizing beamforming matrices, respectively. The results illustrate that appropriate beamforming schemes effectively manage the interference and improve the system capacity. IBFD-MCMU scenarios are further studied in \cite{paulaWsr} and a maximum weighted sum rate (MWSR) beamforming design with power constraints is formulated and solved by exploring the relationship between weighted minimum mean-squared error (WMMSE) and MWSR. The MWSR algorithm yields a joint power allocation and beamforming (JPABF) scheme that effectively manages the complex interference and maximizes the system throughput in IBFD-MCMU networks. Simulations demonstrate the spectral efficiency improvement of IBFD over HD with this design. However, existing studies usually assume self-interference cancellation (SIC) has been realized by other techniques rather than beamforming, which could be challenging in practice.

Indeed, various SIC techniques are proposed to suppress the SI in the propagation, analog, and digital domains. Studies demonstrate the effectiveness of these techniques by simulations and experiments as reported in the literature (see \cite{surveySic, surveyOptic} and references therein). However, enabling effective SIC in multi-input and multi-output (MIMO) systems is still challenging due to the exponentially increased cost and complexity with increasing antennas \cite{hlAccess,jzTwc}. For analog self-interference cancellation (ASIC), a total of $N_t N_r$ analog cancellers will be needed for an IBFD transceiver with $N_t$ transmitting antennas and $N_r$ receiving antennas. Although this number could be reduced to $N_r$ by using auxiliary transmitters, it is still a large number, and additional transmitter chains are costly \cite{hlChinacomm}. For nonlinear digital self-interference cancellation (DSIC), the parameters that need to be calculated become unacceptably high with high-order MIMO systems. It increases almost by the cube of $N_t$ with the nonlinearity order of 3 \cite{digitalSic}. Although the principal component analysis is leveraged to reduce the parameters by 65\% in \cite{digitalSic}, the computational complexity could still be unacceptable with massive MIMO (e.g., $>128$ antennas). Therefore, it is desired to explore beamforming for SIC since it does not require additional processing circuits nor increases the parameters. Existing studies in this direction mainly use the null-space projection (NSP) \cite{robertsBfc,tbSic, jointNsp, fdNsp} or choose orthogonal transmitting and receiving beamforming matrices \cite{robertsBfc}. The NSP-based methods steer the transmitted beams to the direction orthogonal to the SI channel, distorting the original precoders (i.e., transmitting beamformers \footnote{In the MIMO context, transmitting beamforming can be considered a form of precoding, especially when it comes to optimizing the signal considering the spatial aspect of the channel. Therefore, in our study, we generally use precoding to refer to transmitting beamforming.}) and introducing precoding errors. The trade-off between the precoding errors and SIC capability of the NSP-based methods is analyzed in \cite{tbSic}. It illustrates that achieving effective SIC may severely sacrifice the downlink capacity. Choosing orthogonal transmitting and receiving beamformers from the eigenvectors of the intended channel is a good approach to realizing both efficient SIC and effective downlink communications. However, it does not consider the limited dynamic range of practical receivers, which is an obstacle to all-digital SIC schemes \cite{hlChinacomm}. So it is not applicable to practical systems.

In addition to the beamforming, an appropriate power allocation policy can also manage the interference since the strength of interference depends on the transmit power of associated nodes. Power allocation could be implemented solely based on the channel strength in single-antenna systems as in \cite{fdResource}. However, in MIMO systems, power allocation and beamforming should be crossly designed for optimal performance since they will affect each other. Studies illustrate that an appropriate JPABF scheme can significantly improve the system capacity in MIMO systems. A sum rate maximization problem with respect to joint power allocation and beamforming is formulated for a single-cell 2-user network in \cite{xiao2018joint}. It is solved using an iterative algorithm, and simulation results show that it achieves close-to-bound sum rate performance. In addition to the system throughput, a JPABF scheme can also be designed to maximize the energy efficiency \cite{jointMu, Cirik2016beam}. The main challenge of the JPABF design is that the joint optimization problem is non-convex, so it is difficult to derive closed-form solutions \cite{jointMu, jointRelay}. Alternatively, it is solved by an iterative algorithm, causing high computational complexity.  

The JPABF has not yet been intensively studied for IBFD-MCMU networks, where a large amount of CCI and significant SI pose even greater challenges. To the best knowledge of the authors, existing studies in this direction mainly use the MWSR algorithm proposed in \cite{paulaWsr}. The MWSR can indeed achieve significant spectral efficiency; however, there are two disadvantages: 1) it requires sufficient ASIC to be realized by other techniques rather than precoding, which is challenging in practical MIMO systems; 2) the overall computational complexity could be high due to the slow convergence speed. In this paper, we propose a joint power allocation and interference management (JPAIM) algorithm for IBFD-MCMU networks, where the power allocation and beamforming are jointly optimized to minimize the sum of MSE of the network under constraints on the transmit power budget and residual self-interference (RSI). We do not assume a perfect SIC while we investigate the SIC capability of beamformers. Besides, we include practical imperfections (e.g., CSI errors, hardware impairments, and the limited dynamic range of receivers) to derive a robust solution. Compared to existing studies, our algorithm can achieve considerable IBFD gain with insufficient ASIC and reduce computational complexity. Our novel contributions can be summarised as follows.
\begin{itemize}
    \item Robustness to channel uncertainty: we include the effects of imperfect CSI by introducing channel uncertainty and derive a robust design. Simulation results show that our algorithm is robust to imperfect CSI as large channel uncertainty has a smaller impact on our method than on existing methods.
   \item Beamforming cancellation: we exploit beamforming for SIC and reveal that precoders can be leveraged to suppress the SI in the propagation domain, preventing receiver saturation. Thus, we enhance the ASIC capability of precoders to remove or reduce the requirements of RF cancellers, which have high implementation complexity and energy consumption. Different from existing NSP-based methods \cite{tbSic, jointNsp, fdNsp}, we realize it by adding a constraint on the received RSI power, yielding a joint design. Numerical results show the advantage of our design over existing methods that it can enhance the ASIC capability of precoders with less effect on the downlink communication. Thus, our method can enable a low-cost but efficient IBFD transceiver design.
   \item A new JPABF formulation: we use scalar coefficients to reflect the power allocation policy and formulate a minimization problem with respect to the coefficients and beamforming matrices. Since the formulated problem is not convex and may not be converted into a convex form by simple manipulations, we propose a two-stage method to obtain the solution. The sub-problems in each stage are convex, and closed-form solutions are given. Then we propose an iterative algorithm to obtain the overall solution as the optimized variables show inter-dependence on each other. We analyze the computational complexity and convergence behavior of the algorithm. Simulation results demonstrate that our algorithm significantly reduces the time complexity compared to the existing MWSR algorithm in \cite{paulaWsr} at the cost of acceptable sum rate loss due to the scalar power coefficients and decomposition. The benefit is significant with single-antenna users, which is meaningful for practical cellular network deployment.
\end{itemize}
The rest of the paper is organized as follows: In Section \ref{sec:preliminaries}, we give the system model of the IBFD-MCMU network with transceiver hardware impairments and channel uncertainty. Then, we formulate the joint optimization problem in Section \ref{sec:problem_formulation} and solve it by decomposing it into two sub-problems. The solutions of the two sub-problems are derived in Section \ref{sec:interferenceMng}, and the iterative algorithm to obtain the overall solution is given and analyzed. In Section \ref{sec:results}, we evaluate the performance of the proposed design by simulations. Finally, conclusions are drawn in Section \ref{sec:conclusions}. \\
\emph{Notations:}
$\mathbf{A}$, $\mathbf{a}$, and $a$ represent a matrix, a vector, and a scalar, respectively. $tr(\mathbf{A})$, $Cov(\mathbf{A})$, $\left| \mathbf{A} \right|$, $\left\| \mathbf{A} \right\|$, $\mathbf{A}^{H}$, $\mathbf{A}^{T}$, and $\mathbf{A}^{-1}$ denote the trace, covariance matrix, determinant, 2-norm, Hermitian, transpose, and inverse of matrix $\mathbf{A}$. $\mathbf{I}_{k}$ represents an identity matrix with $k$ elements along its diagonal. $\mathcal{D}(\mathbf{A})$ denotes a diagonal matrix containing the elements along the diagonal of $\mathbf{A}$. $[\mathbf{A}]_{i,j}$ represents the element of the $i^{th}$ row and $j^{th}$ column of matrix $\mathbf{A}$. $\mathbf{A}(:,j)$ denotes the $j^{th}$ column of matrix $\mathbf{A}$. $\mathbb{E}\left \{ \cdot \right \}$ denotes the expectation operation. $\max\left\{ \cdot \right\}$ and $\min\left\{ \cdot \right\}$ denote the maximum and minimum element of the set. $\mathcal{R} \left\{ \cdot \right\}$ represents the real part of complex numbers. $\mathcal{CN}(0,\sigma^2)$ denotes a complex normal distribution with zero mean and variance of $\sigma^2$.

\section{System Model}
\label{sec:preliminaries}
We consider a $G$-cell network, where the base station in the $g^{th}$ cell serves  $K_g^d$ downlink (DL) users and $K_g^u$ uplink (UL) users. Assume all base stations (BS) have $N_{bs}$ transmitting antennas and $M_{bs}$ receiving antennas, while UL user equipment (UE) has $N_{ue}$ transmitting antennas and DL UE has $M_{ue}$ receiving antennas. Fig. \ref{fig:ibfd_mcmu} shows the interference between the nodes in this IBFD-MCMU network.
\begin{figure}[h]
      \centering
      \includegraphics[width=0.5\textwidth]{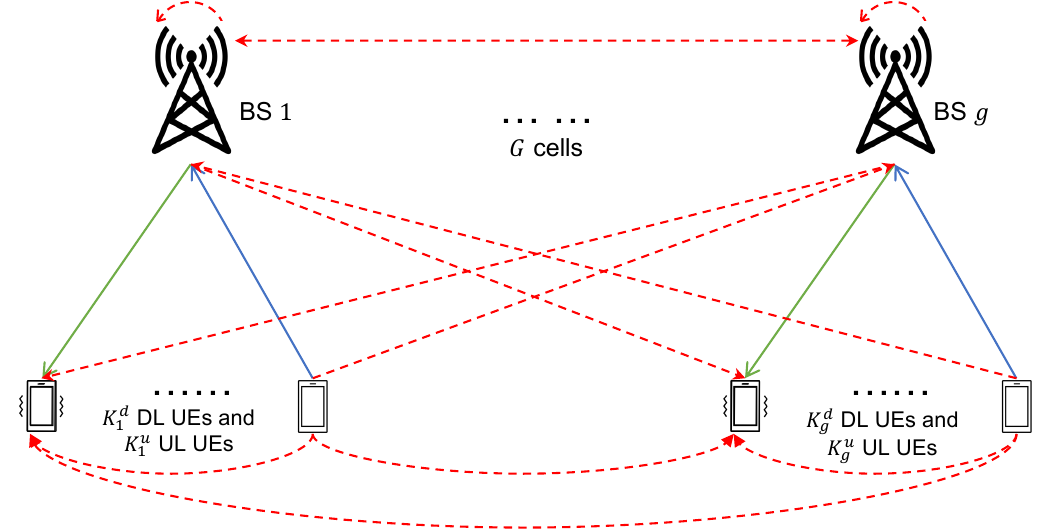}
      \caption{Interference between nodes in an IBFD multi-cell multi-user network.}
      \label{fig:ibfd_mcmu}
\end{figure}

\subsection{Transmitted Signals}
Let $\mathbf{s}_{k_g^d} \in \mathbb{C}^{b_d \times 1}$ represent the data symbols intended to the $k^{th}$ downlink user in the $g^{th}$ cell with statistics $\mathbb{E}\left\{  \mathbf{s}_{k_g^d}  \mathbf{s}_{k_g^d}^H \right\} = \mathbf{I}_{b_d}$; $\mathbf{V}_{k_g^d} \in \mathbb{C}^{N_{bs} \times b_d}$ denotes the associated precoding matrix; and $\alpha_{k_g^d}$ is the power coefficient that reflects the transmit power allocated to the corresponding user. The signal transmitted at the $g^{th}$ BS can be denoted as
\begin{equation}
\label{eq:xg}
    \mathbf{x}_g = \sum_{k=1}^{K_g^d} \left( \alpha_{k_g^d} \mathbf{V}_{k_g^d} \mathbf{s}_{k_g^d} + \mathbf{c}_{k_g^d} \right),
\end{equation}
where $ \mathbf{c}_{k_g^d}$ represents the hardware impairments due to the limited dynamic range of practical transmitters. The limited dynamic range is a natural consequence of imperfect digital-to-analog converters (DACs), oscillators, and power amplifiers (PAs). Experimental measurements demonstrate that the transmitter hardware impairments are independent of the transmitted signals and can be closely described by the circular complex Gaussian model as \cite{Day2012full}
\begin{equation}
    \mathbf{c}_g \sim \mathcal{CN} \left( \mathbf{0}, \kappa_{bs} \mathbb{E}\left \{ \mathcal{D}\left ( \mathbf{V}_g\mathbf{A}_g\mathbf{A}_g^H\mathbf{V}_g^H \right ) \right \} \right),
\end{equation}
where $\kappa_{bs}\ll 1$ characterizes the dynamic range of the transmitters at base stations. The base station transmits an accumulation of signals for DL users in the same cell. Let $\mathbf{s}_g = \left[ \mathbf{s}_{1_g^d}^T, \mathbf{s}_{2_g^d}^T, \dots, \mathbf{s}_{K_g^d}^T \right]^T$, $\mathbf{A}_g = \mathcal{D} \left( \alpha_{1_g^d} \mathbf{I}_{b_d}, \alpha_{2_g^d} \mathbf{I}_{b_d}, \dots, \alpha_{K_g^d} \mathbf{I}_{b_d}\right)$, and $\mathbf{V}_g = \left[ \mathbf{V}_{1_g^d}, \mathbf{V}_{2_g^d}, \dots, \mathbf{V}_{K_g^d} \right]$, the accumulated transmitted signals can be denoted as $\tilde{\mathbf{x}}_g = \mathbf{V}_g \mathbf{A}_g \mathbf{s}_g$. The transmit power of the $g^{th}$ base station is given as
\begin{equation}
\label{eq:pc_bs}
    \begin{split}
        P_g = \mathbb{E} \left \{ tr\left ( \tilde{\mathbf{x}}_g \tilde{\mathbf{x}}_g^H \right ) \right \} & = tr\left ( \mathbb{E}\left \{ \mathbf{V}_g\mathbf{A}_g\mathbf{s}_g \mathbf{s}_g^H\mathbf{A}_g^H\mathbf{V}_g^H \right \} \right ) \\
        & = \sum_{k=1}^{K_g^d} \alpha_{k_g^d}^2 tr\left ( \mathbf{V}_{k_g^d} \mathbf{V}_{k_g^d}^H \right ).
    \end{split}
\end{equation}
Similarly, let $\mathbf{s}_{k_g^u} \in \mathbb{C}^{b_u \times 1}$ denote the uplink payload symbols with statistics $\mathbb{E}\left\{  \mathbf{s}_{k_g^u}  \mathbf{s}_{k_g^u}^H \right\} = \mathbf{I}_{b_u}$; $\mathbf{V}_{k_g^u}$ represents the associated precoding matrix; $\gamma_{k_g^u}$ represents the power coefficient that reflects the allocated transmit power; and $\kappa_{ue}\ll 1$ characterizes the dynamic range of the transmitters at the user equipment. The transmitted signal of the $k^{th}$ uplink user in the $g^{th}$ cell (i.e., $k_g^u$) can be denoted as
\begin{equation}
    \mathbf{x}_{k_g^u} =  \underbrace{\gamma_{k_g^u} \mathbf{V}_{k_g^u} \mathbf{s}_{k_g^u}}_{\tilde{\mathbf{x}}_{k_g^u}} + \mathbf{c}_{k_g^u} ,
\end{equation}
where $\mathbf{c}_{k_g^u}$ denotes the transmitter hardware impairments at this uplink user that can be described as
\begin{equation}
    \mathbf{c}_{k_g^u} \sim \mathcal{CN} \left( \mathbf{0}, \kappa_{ue} \gamma_{k_g^u}^2 \mathbb{E} \left \{ \mathcal{D}\left ( \mathbf{V}_{k_g^u} \mathbf{V}_{k_g^u}^H \right ) \right \} \right) .
\end{equation}
The transmit power of the $k^{th}$ UE in the $g^{th}$ cell is given as
\begin{equation}
\label{eq:pc_ue}
    P_{k_g^u} = \mathbb{E} \left \{ tr\left ( \tilde{\mathbf{x}}_{k_g^u} \tilde{\mathbf{x}}_{k_g^u}^H \right ) \right \} = \gamma_{k_g^u}^2  tr\left ( \mathbf{V}_{k_g^u}\mathbf{V}_{k_g^u}^H \right ).
\end{equation}

\subsection{Received Signals}
Let $\mathbf{H}_{B,A}$ represent the coefficients matrix of the wireless MIMO channel from node $A$ to node $B$ throughout this paper, the signal received by the $g^{th}$ BS can be denoted as
\begin{equation}
\label{eq:received_signal_bs}
    \begin{split}
    \mathbf{y}_g & = \underbrace{\mathbf{H}_{g,k_g^u} \mathbf{x}_{k_g^u} + \mathop{\sum^{G}\sum^{K_i^u}}_{(i,j)\neq (k,g)} \mathbf{H}_{g,i_j^u} \mathbf{x}_{i_j^u} + \mathbf{H}_{g,g} \mathbf{x}_g +  \sum_{j\neq g}^{G} \mathbf{H}_{g,j} \mathbf{x}_j}_{\tilde{\mathbf{y}}_g} \\
    & \quad + \mathbf{n}_g + \mathbf{d}_g ,
    \end{split}
\end{equation}
where the third and fourth terms at the right hand represent the additional interference due to IBFD operation; $\mathbf{n}_g$ denotes the additive white Gaussian noise (AWGN) of the receiver such that $\mathbf{n}_g \sim \mathcal{CN} \left ( \mathbf{0}, \sigma_g^2\mathbf{I} \right )$; and $\mathbf{d}_g$ denotes the hardware impairments due to the limited dynamic range of practical receivers. The limited dynamic range is a natural consequence of imperfect low-noise amplifiers (LNAs), oscillators, and analog-to-digital converters (ADCs). Experimental measurements demonstrate that the receiver hardware impairments are independent of the received signals and can be closely described by the circular complex Gaussian model as \cite{Day2012full}
\begin{equation}
    \mathbf{d}_{g} \sim \mathcal{CN} \left( \mathbf{0}, \beta_{bs} \mathbb{E} \left \{ \mathcal{D}\left ( \tilde{\mathbf{y}}_g \tilde{\mathbf{y}}_g^H \right ) \right \} \right) ,
\end{equation}
where $\beta_{bs}\ll 1$ characterizes the dynamic range of the receivers at base stations. Similarly, the signal received by the $k^{th}$ downlink UE in the $g^{th}$ cell can be denoted as
\begin{equation}
\label{eq:received_signal_ue}
    \begin{split}
   \mathbf{y}_{k_g^d} &= \underbrace{\mathbf{H}_{k_g^d,g} \mathbf{x}_{k_g^d} + \mathop{\sum^{G}\sum^{K_i^d}}_{(i,j)\neq (k,g)} \mathbf{H}_{k_g^d,j} \mathbf{x}_{i_j^d} + \sum_{j=1}^{G} \sum_{i=1}^{K_j^u} \mathbf{H}_{k_g^d,i_j^u} \mathbf{x}_{i_j^u}}_{\tilde{\mathbf{y}}_{k_g^d}} \\
   & \quad + \mathbf{n}_{k_g^d} + \mathbf{d}_{k_g^d} ,
   \end{split}
\end{equation}
where the third term at the right hand represent the additional interference due to IBFD operation; $\mathbf{n}_{k_g^d}$ denotes the AWGN of the receiver such that $\mathbf{n}_{k_g^d} \sim \mathcal{CN} \left ( \mathbf{0}, \sigma_{k_g^d}^2\mathbf{I} \right )$; and $\mathbf{d}_{k_g^d}$ denotes the hardware impairments of the receiver described as
\begin{equation}
   \mathbf{d}_{k_g^d} \sim \mathcal{CN} \left( \mathbf{0}, \beta_{ue} \mathbb{E} \left \{ \mathcal{D}\left ( \tilde{\mathbf{y}}_{k_g^d} \tilde{\mathbf{y}}_{k_g^d}^H \right ) \right \} \right) ,
\end{equation}
where $\beta_{ue}\ll 1$ characterizes the dynamic range of the receivers at the user equipment. \\
\emph{Note}: The values of $\kappa_{bs}$, $\kappa_{ue}$, $\beta_{bs}$, and $\beta_{ue}$ are related to the measurable error vector magnitudes (EVMs) of corresponding RF transceivers. The HWIs model utilized is a verified model based on experiments \cite{Day2012full} and has been adopted by many studies in the field of wireless communications (see \cite{paulaWsr} and references therein). Due to the power consumption problem in MIMO systems, the transceivers tend to use low-resolution DACs/ADCs \cite{quantNoise, Choi2022energy}. So we assume that the dynamic range of receivers is mainly limited by DACs/ADCs, which can be described by the additive quantization noise model (AQNM) given in \cite{Choi2022energy}.

\vspace{-3mm}
\subsection{Channel Uncertainty}
Accurate channel estimation is challenging in practice due to limited training resources, resulting in channel uncertainty in the obtained channel state information (CSI). Let $\hat{\mathbf{H}}_{B,A}$ denote the estimate associated with the actual wireless channel $\mathbf{H}_{B,A}$, they are related as \cite{chError}
\begin{equation}
    \mathbf{H}_{B,A} = \hat{\mathbf{H}}_{B,A} + \mathbf{\Delta}_{B,A} ,
\end{equation}
where $\mathbf{\Delta}_{B,A}$ denotes the channel uncertainty (i.e., estimation errors). In this paper, we adapt the stochastic error model, which describes the channel uncertainty as $\mathbf{\Delta}_{B,A} \sim \mathcal{CN} \left( \mathbf{0}, \tilde{\sigma}_{B,A}^2 \mathbf{I} \right)$ \cite{paulaWsr, chError}. Using the accessible estimated CSI with the statistical channel uncertainty, the received signals given in Equations \eqref{eq:received_signal_bs} and \eqref{eq:received_signal_ue} can be written as
\begin{equation}
\label{eq:received_signal_bs_che}
    \begin{split}
    \mathbf{y}_g & = \hat{ \mathbf{H}}_{g,k_g^u} \mathbf{x}_{k_g^u} + \mathop{\sum^{G}\sum^{K_i^u}}_{i,j\neq k,g} \hat{\mathbf{H}}_{g,i_j^u} \mathbf{x}_{i_j^u} + \mathbf{H}_{g,g} \mathbf{x}_g \\
    & \quad +  \sum_{j\neq g}^{G} \hat{\mathbf{H}}_{g,j} \mathbf{x}_j + \mathbf{n}_g + \mathbf{d}_g + \mathbf{e}_g ,
     \end{split}
\end{equation}
\begin{equation}
\label{eq:received_signal_ue_che}
    \begin{split}
    \mathbf{y}_{k_g^d} & = \hat{\mathbf{H}}_{k_g^d,g} \mathbf{x}_{k_g^d} + \mathop{\sum^{G}\sum^{K_i^d}}_{i,j\neq k,g} \hat{\mathbf{H}}_{k_g^d,j} \mathbf{x}_{i_j^d} + \sum_{j=1}^{G} \sum_{i=1}^{K_j^u} \hat{\mathbf{H}}_{k_g^d,i_j^u} \mathbf{x}_{i_j^u} \\
    & \quad + \mathbf{n}_{k_g^d} + \mathbf{d}_{k_g^d} + \mathbf{e}_{k_g^d} ,
    \end{split}
\end{equation}
where $\mathbf{e}_g$ and $\mathbf{e}_{k_g^d}$ represent the noise due to imperfect CSI with statistics that $\mathbf{e}_g \sim \mathcal{CN} \left( \mathbf{0}, \hat{\sigma}_g^2 \mathbf{I}_{M_{bs}} \right)$ and $\mathbf{e}_{k_g^d} \sim \mathcal{CN} \left( \mathbf{0}, \hat{\sigma}_{k_g^d}^2 \mathbf{I}_{M_{ue}} \right)$; $\hat{\sigma}_g^2$ and $\hat{\sigma}_{k_g^d}^2$ are given as
\begin{align*}
    \hat{\sigma}_g^2 = \sum_{j=1}^{G} \sum_{i=1}^{K_j^u} \tilde{\sigma}_{g,i_j^u}^2 tr \left( \mathbf{T}_{i_j^u} \right) +  \sum_{j\neq g}^{G} \tilde{\sigma}_{g,j}^2 tr \left( \mathbf{T}_j \right) ,
\end{align*}
\begin{align*}
    \hat{\sigma}_{k_g^d}^2 = \sum_{j=1}^{G} \sum_{k=1}^{K_j^d} \tilde{\sigma}_{k_g^d,j}^2 tr \left( \mathbf{T}_{i_j^d} \right) + \sum_{j=1}^{G} \sum_{i=1}^{K_j^u} \tilde{\sigma}_{k_g^d,i_j^u}^2 tr \left( \mathbf{T}_{i_j^u} \right) ,
\end{align*}
as detailed in Appendix \ref{appendix:chError}. We use the perfect CSI of SI channels since it can be estimated with high SINR \cite{paulaWsr}.

\section{Problem Formulation}
\label{sec:problem_formulation}
The power allocation and beamforming should have a common objective for the joint design. We chose the mean-squared error (MSE) minimization as the design objective, which is widely used in wireless networks \cite{jzTwc, acMse}.

\vspace{-3mm}
\subsection{Mean-Squared Errors}
Using the simplicities given in Appendix \ref{appendix:simp_mse}, the MSE of the $k^{th}$ downlink user and uplink user in the $g^{th}$ cell can be given as Equations \eqref{eq:mse_kgd} and \eqref{eq:mse_kgu} on the top of next page.
\newcounter{mytempeqncnt}
\begin{figure*}
    \normalsize
    \setcounter{mytempeqncnt}{\value{equation}}
    \begin{equation}
      \begin{split}
   \label{eq:mse_kgd}
     \varepsilon_{k_g^d}  \left ( \alpha, \gamma, \mathbf{V}, \mathbf{U} \right ) =  \mathbb{E} \left \{ \left \| \mathbf{s}_{k_g^d} - \mathbf{U}_{k_g^d}^H \mathbf{y}_{k_g^d} \right \|^2 \right \} & = tr \left ( \mathbf{U}_{k_g^d}^H \left ( \sum_{j=1}^{G} \sum_{i=1}^{K_j^d} \hat{\mathbf{H}}_{k_g^d,j} \mathbf{T}_{i_j^d} \hat{\mathbf{H}}_{k_g^d,j}^H + \sum_{j=1}^{G} \sum_{i=1}^{K_j^u} \hat{\mathbf{H}}_{k_g^d,i_j^u} \mathbf{T}_{i_j^u} \hat{\mathbf{H}}_{k_g^d,i_j^u}^H  \right )\mathbf{U}_{k_g^d}  \right )\\
    & \quad + \sigma_r^2 tr \left ( \mathbf{U}_{k_g^d}^H \mathcal{D} \left ( \sum_{j=1}^{G} \sum_{i=1}^{K_j^d} \hat{\mathbf{H}}_{k_g^d,j} \mathbf{T}_{i_j^d} \hat{\mathbf{H}}_{k_g^d,j}^H + \sum_{j=1}^{G} \sum_{i=1}^{K_j^u} \hat{\mathbf{H}}_{k_g^d,i_j^u} \mathbf{T}_{i_j^u} \hat{\mathbf{H}}_{k_g^d,i_j^u}^H  \right )\mathbf{U}_{k_g^d}  \right ) \\
    & \quad + \sigma_{k_g^d}^2 tr \left ( \mathbf{U}_{k_g^d}^H \mathbf{U}_{k_g^d} \right ) -2 \alpha_{k_g^d} \mathcal{R} \left \{ tr \left ( \mathbf{U}_{k_g^d}^H \hat{\mathbf{H}}_{k_g^d,g} \mathbf{V}_{k_g^d} \right ) \right \} + b_d ,
         \end{split}
  \end{equation}
    \hrulefill
    \vspace*{4pt}
\end{figure*}
\begin{figure*}
    \normalsize
    \setcounter{mytempeqncnt}{\value{equation}}
    \begin{equation}
      \begin{split}
   \label{eq:mse_kgu}
     \varepsilon_{k_g^u} \left ( \alpha, \gamma, \mathbf{V}, \mathbf{U} \right ) = \mathbb{E} \left \{ \left \| \mathbf{s}_{k_g^u} - \mathbf{U}_{k_g^u}^H \mathbf{y}_{g} \right \|^2 \right \} = & tr \left ( \mathbf{U}_{k_g^u}^H \left ( \sum_{j=1}^{G} \sum_{i=1}^{K_j^u} \hat{\mathbf{H}}_{g,i_j^u} \mathbf{T}_{i_j^u} \hat{\mathbf{H}}_{g,i_j^u}^H + \sum_{j=1}^{G} \hat{\mathbf{H}}_{g,j} \mathbf{T}_j \hat{\mathbf{H}}_{g,j}^H \right )\mathbf{U}_{k_g^u}  \right )\\
    & \quad + \sigma_r^2 tr \left (  \mathbf{U}_{k_g^u}^H \mathcal{D} \left ( \sum_{j=1}^{G} \sum_{i=1}^{K_j^u} \hat{\mathbf{H}}_{g,i_j^u} \mathbf{T}_{i_j^u} \hat{\mathbf{H}}_{g,i_j^u}^H + \sum_{j=1}^{G} \hat{\mathbf{H}}_{g,j} \mathbf{T}_j \hat{\mathbf{H}}_{g,j}^H \right )\mathbf{U}_{k_g^u}  \right ) \\
    & \quad + \sigma_{g}^2 tr \left ( \mathbf{U}_{k_g^u}^H \mathbf{U}_{k_g^u} \right ) -2 \gamma_{k_g^u} \mathcal{R} \left \{ tr \left ( \mathbf{U}_{k_g^u}^H \hat{\mathbf{H}}_{g,k_g^u} \mathbf{V}_{k_g^u} \right ) \right \} + b_u.
         \end{split}
  \end{equation}
    \hrulefill
    \vspace*{4pt}
\end{figure*}

\vspace{-3mm}
\subsection{Beamforming for Self-interference Cancellation}
\label{subsec:bf_asic}
We consider all-digital beamforming throughout this paper. The received RSI at the $g^{th}$ base station is given as
\begin{equation}
\label{eq:yg_si}
    \mathbf{y}_{g,\text{si}} = \mathbf{H}_{g,g} \sum_{k=1}^{K_g^d} \left ( \alpha_{k_g^d} \mathbf{V}_{k_g^d} \mathbf{s}_{k_g^d} + \mathbf{c}_{k_g^d}  \right ) ,
\end{equation}
which gives the power of the RSI as $ tr \left ( \mathbf{y}_{g,\text{si}} \mathbf{y}_{g,\text{si}} ^H \right )$. The RSI may exceed the dynamic range of receivers without active ASIC. Equation \eqref{eq:yg_si} indicates that the precoder can be leveraged to steer the beams to the desired direction in the propagation domain to suppress the SI as
\begin{equation}
      \begin{split}
    \min_{\mathbf{V}_{k_g^d}} \mathbb{E} \left \{ \left \| \mathbf{y}_{g,\text{si}} \right \|^2 \right \} = & \min_{\mathbf{V}_{k_g^d}} \alpha_{k_g^d}^2 \bigg ( tr \left ( \mathbf{H}_{g,g} \mathbf{V}_{k_g^d} \mathbf{V}_{k_g^d}^H \mathbf{H}_{g,g}^H \right ) \\
    & \quad + \kappa_{bs} tr \left ( \mathbf{H}_{g,g} \mathcal{D} \left ( \mathbf{V}_{k_g^d} \mathbf{V}_{k_g^d}^H \right ) \mathbf{H}_{g,g}^H \right ) \bigg ) .
     \end{split}
\end{equation}
It should be noted that the precoder suppresses the analog SI before the receiver to prevent receiver saturation. Thus the precoder-based cancellation is regarded as an ASIC technique, although it is implemented by digital signal processing. The receiving beamformers can also minimize the RSI as
\begin{equation}
    \min_{\mathbf{U}_{g}} \mathbb{E} \left \{ \left \| \mathbf{U}_{g}^H\mathbf{y}_{g,\text{si}} \right \|^2 \right \} .
\end{equation}
This process is taken after the ADCs of receivers, so it is regarded as a DSIC technique and does not help prevent receiver saturation. As previous studies illustrate, sufficient ASIC is critical to any effective SIC scheme for practical implementation, while active ASCI has high complexity and cost \cite{hlIcc, hlWcnc}. Therefore, it is desired to enhance the ASIC capability of precoders to implement a low-cost but effective IBFD transceiver. The RSI can be eliminated if $\mathbf{V}_{k_g^d}$ spans of vectors from the null-space of $\mathbf{H}_{g,g} \mathbf{H}_{g,g}^H + \kappa_{bs} \mathcal{D} \left ( \mathbf{H}_{g,g} \mathbf{H}_{g,g}^H \right )$. Existing NSP-based methods follow this idea \cite{robertsBfc,tbSic, jointNsp, fdNsp}, but they did not consider the downlink communication while doing the projection nor the transmitter distortion. Thus, the NSP may significantly reduce the downlink rate, and its performance is limited to transmitter distortions. Choosing orthogonal transmitting and receiving beamforming matrices as \cite{robertsBfc} can minimize $\mathbb{E} \left \{ \left \| \mathbf{U}_{g}^H\mathbf{y}_{g,\text{si}} \right \|^2 \right \}$, but it cannot minimize the RSI power before ADCs, i.e., $\mathbb{E} \left \{ \left \| \mathbf{y}_{g,\text{si}} \right \|^2 \right \}$. Thus, its performance is compromised by the limited dynamic range. \\
Let $l_g$ denote the pathloss of the effective SI channel at the $g^{th}$ base station, which includes the effects of other ASIC techniques (e.g., antenna isolation or RF cancellers), the ASIC capability of precoders (i.e., the ASIC depth provided by precoders) can be described in dB as
\begin{align}
\label{eq:asic_cap}
    \text{ASIC}_{g, \text{pre}} = 10 \times \log_{10} \frac{l_g \:  \mathbb{E}\left \{ tr\left ( \mathbf{x}_{g} \mathbf{x}_{g}^H \right ) \right \}}{ \mathbb{E} \left \{ tr\left ( \mathbf{y}_{g,\text{si}} \mathbf{y}_{g,\text{si}}^H \right ) \right \}},
\end{align}
where the numerator describes the received power without precoding, and the denominator is the RSI power with precoding; $ \mathbf{x}_{g}$ and $\mathbf{y}_{g,\text{si}}$ are given in \eqref{eq:xg} and \eqref{eq:yg_si}, respectively.

\vspace{-3mm}
\subsection{Constrained Optimization Problem}
Minimizing the sum of MSE by optimizing precoders will include the effects of combiners, as the MSE expressions suggest. As a result, precoders may not provide a sufficient ASIC depth to prevent receiver saturation if combiners have effectively suppressed the RSI and distortions caused by it. However, the degrees of freedom of combiners occupied by SIC and uncorrelated distortions due to receiver saturation will naturally compromise the uplink communication. Thus, the existing beamforming schemes (e.g., MWSR in \cite{paulaWsr}, and other schemes in \cite{Cirik2016beam,bfFd,jzTwc}) for IBFD networks will require sufficient ASIC realized by other techniques to guarantee their performance. In contrast, we would like the precoder to provide a sufficient ASIC depth to reduce the RSI power, which can be done by adding the constraint on the RSI power, formulating the problem as
\begin{align}
\label{eq:opt_problem0}
        (P.1) \; \;  & \min_{\left\{ \alpha, \gamma, \mathbf{V}, \mathbf{U} \right\}_{\forall k,g}} \sum_{g=1}^{G}\sum_{k=1}^{K_g^d} \varepsilon_{k_g^d} \left ( \alpha, \gamma, \mathbf{V}, \mathbf{U} \right ) \nonumber \\
        & \;\;\;\;\;\;\;\;\;\;\; + \sum_{g=1}^{G}\sum_{k=1}^{K_g^u} \varepsilon_{k_g^u}  \left ( \alpha, \gamma, \mathbf{V}, \mathbf{U} \right ) \\
                        & \text{s.t.} \; \;  (C.1) \;\; \sum_{k=1}^{K_g^d} \alpha_{k_g^d}^2  tr\left ( \mathbf{V}_{k_g^d} \mathbf{V}_{k_g^d}^H \right ) \leq P_{bs}, \; \forall \; g  \\
                        & \;\;\;\;\;\; (C.2) \;\; \gamma_{k_g^u}^2  tr\left ( \mathbf{V}_{k_g^u} \mathbf{V}_{k_g^u}^H \right ) \leq P_{ue}, \; \forall \; k,g    \\
                        & \;\;\;\;\;\;  (C.3) \;\; \epsilon_{\text{rsi}, g}\left ( \mathbf{V}\right ) \leq \bar{\epsilon}_{\text{rsi},g} ,
\end{align}
where $\epsilon_{\text{rsi}, g}\left ( \mathbf{V}\right ) = \mathbb{E} \left \{ tr\left ( \mathbf{y}_{g,\text{si}} \mathbf{y}_{g,\text{si}}^H \right ) \right \}$ represents the received analog RSI power; $\bar{\epsilon}_{\text{rsi},g}$ is the tolerable RSI power that could be chosen according to the dynamic range of receivers. We want to suppress the SI solely by precoding schemes instead of reducing the transmit power of the downlink payload. Thus, the RSI is a function of only precoders, as denoted in the formulation. According to the weighted sum rate and WMMSE relationship in \cite{paulaWsr}, minimizing the sum MSE under the RSI power constraint (i.e., problem $(P.1)$) can maximize the sum rate of the network with a constrained RSI power (i.e., enhanced ASIC depth to prevent receiver saturation).
\newtheorem{lemma}{Lemma}
\begin{lemma}
\label{lemma:opt_problem}
    The optimization problem $(P.1)$ is equivalent to the problem $(P.2)$ shown below with appropriate $\nu_g$ . \\
    \begin{align}
    \label{eq:opt_problem}
            (P.2) \; \; & \min_{\left\{ \alpha, \gamma, \mathbf{V}, \mathbf{U} \right\}_{\forall k,g}}  \sum_{g=1}^{G}\sum_{k=1}^{K_g^d} \varepsilon_{k_g^d} \left ( \alpha, \gamma, \mathbf{V}, \mathbf{U} \right ) \nonumber \\
            & \quad + \sum_{g=1}^{G}\sum_{k=1}^{K_g^u} \varepsilon_{k_g^u}  \left ( \alpha, \gamma, \mathbf{V}, \mathbf{U} \right )  + \sum_{g=1}^{G} \nu_g \epsilon_{\text{rsi}, g}  \left ( \mathbf{V}\right ) \\
                        \label{eq:opt_c1}
                        & \text{s.t.} \; \;  (C.1) \;\; \sum_{k=1}^{K_g^d} \alpha_{k_g^d}^2  tr\left ( \mathbf{V}_{k_g^d} \mathbf{V}_{k_g^d}^H \right ) \leq P_{bs}, \; \forall \; g  \\
                        \label{eq:opt_c2}
                        & \;\;\;\;\;\;  (C.2) \;\; \gamma_{k_g^u}^2  tr\left ( \mathbf{V}_{k_g^u} \mathbf{V}_{k_g^u}^H \right ) \leq P_{ue}, \; \forall \; k,g ,
    \end{align}
\end{lemma}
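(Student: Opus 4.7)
The plan is to recognize $(P.2)$ as the partial Lagrangian relaxation of $(P.1)$ in which only the RSI constraint $(C.3)$ is dualized, while the per-node power budgets $(C.1)$ and $(C.2)$ are kept as explicit constraints. Introducing non-negative multipliers $\nu_g \geq 0$ and writing
\begin{equation*}
\mathcal{L} = \sum_{g,k} \varepsilon_{k_g^d} + \sum_{g,k} \varepsilon_{k_g^u} + \sum_{g=1}^{G} \nu_g \bigl( \epsilon_{\text{rsi}, g}(\mathbf{V}) - \bar{\epsilon}_{\text{rsi}, g} \bigr),
\end{equation*}
one notes that $\bar{\epsilon}_{\text{rsi},g}$ is independent of the optimization variables, so minimizing $\mathcal{L}$ subject to $(C.1)$--$(C.2)$ differs from the objective of $(P.2)$ only by an additive constant and hence shares the same set of minimizers.

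Next, I would invoke the KKT conditions for $(P.1)$. Under a mild constraint qualification at the optimum (Slater's condition holds for $(C.1)$ and $(C.2)$ by scaling transmit powers down, and $(C.3)$ admits a strictly decreasing direction by shrinking $\mathbf{V}_{k_g^d}$), there exist multipliers $\nu_g^{\star} \geq 0$ such that any (local) optimizer $(\alpha^{\star}, \gamma^{\star}, \mathbf{V}^{\star}, \mathbf{U}^{\star})$ of $(P.1)$ is a stationary point of $\mathcal{L}$ on the feasible set defined by $(C.1)$ and $(C.2)$ and satisfies the complementary slackness condition $\nu_g^{\star} \bigl( \epsilon_{\text{rsi},g}(\mathbf{V}^{\star}) - \bar{\epsilon}_{\text{rsi},g} \bigr) = 0$ for every $g$. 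Setting $\nu_g = \nu_g^{\star}$ in $(P.2)$ then makes the same tuple a KKT point of $(P.2)$, establishing one direction of the equivalence.

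For the converse, any KKT point of $(P.2)$ evaluated at the multipliers $\nu_g^{\star}$ must be feasible for $(P.1)$: if $\nu_g^{\star}=0$, then $(C.3)$ is inactive at the optimum of $(P.1)$ and feasibility is inherited; if $\nu_g^{\star}>0$, complementary slackness forces $\epsilon_{\text{rsi},g}(\mathbf{V}^{\star}) = \bar{\epsilon}_{\text{rsi},g}$, and the constraint is again met. Because the objective of $(P.2)$ evaluated at $\nu_g^{\star}$ and at a feasible $(P.1)$ point equals the objective of $(P.1)$ plus the constant $-\sum_g \nu_g^{\star} \bar{\epsilon}_{\text{rsi},g}$, the two problems attain the same optimizers, which is what the lemma asserts.

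The main obstacle is the non-convexity of $(P.1)$: the argument above yields equivalence at the level of KKT/stationary points rather than a zero-duality-gap statement in the textbook convex sense. To lift this to global optima I would invoke monotonicity of the perturbation function $v(\bar{\epsilon}_{\text{rsi},1},\dots,\bar{\epsilon}_{\text{rsi},G})$ with respect to each threshold: since relaxing any $\bar{\epsilon}_{\text{rsi},g}$ can only decrease the achievable sum MSE, a valid supporting slope $\nu_g^{\star} \geq 0$ exists at the globally optimal RSI level. Combined with the KKT argument, this justifies the phrase ``with appropriate $\nu_g$'' in the statement of the lemma and completes the proof sketch.
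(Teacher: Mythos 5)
Your proposal follows essentially the same route as the paper: the paper's Appendix B likewise treats $(P.2)$ as the partial Lagrangian of $(P.1)$ in which only the RSI constraint $(C.3)$ is dualized, observes that the constant $-\sum_g \varpi_{\text{rsi},g}\bar{\epsilon}_{\text{rsi},g}$ does not affect the minimizers, and concludes equivalence by matching the stationarity conditions in $\mathbf{V}_{k_g^d}$ when $\nu_g$ equals the multiplier of $(C.3)$. Your first two paragraphs reproduce (and slightly sharpen, via complementary slackness) that argument; only your final step overreaches, since for a non-convex problem monotonicity of the perturbation function does not by itself guarantee a supporting slope and hence equivalence of global optima, but the paper's own proof does not attempt that either and stops at the level of first-order conditions.
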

\begin{proof}
See Appendix \ref{appendix:lemma_proof}.
\end{proof}
According to Lemma \ref{lemma:opt_problem}, the optimal power allocation and beamforming schemes can be obtained from solving problem $(P.2)$. Still, an appropriate value of $\nu_g$ needs to be chosen at first to formulate the problem. As stated in Appendix \ref{appendix:lemma_proof}, $\nu_g$ should be equal to $\varpi_{\text{rsi}, g}$ to make the problems equivalent, while $\varpi_{\text{rsi}, g}$ depends on the chosen tolerable RSI power $\bar{\epsilon}_{\text{rsi},g}$. It is challenging to mathematically derive the value of $\nu_g$ from given $\bar{\epsilon}_{\text{rsi},g}$ since the multiple variables are entangled. However, we can directly give the value of $\nu_g$, and it will reflect a specific tolerable RSI power. The resulting solutions will maximize the sum rate under the corresponding RSI power constraint. We choose $\nu_g$ according to its properties. As \eqref{eq:opt_problem} suggests, $\nu_g$ can adjust the preference of the precoder: the objective function will be dominated by the sum of MSE when $\nu_g$ is relatively small, and conversely, the objective function will be dominated by the RSI when $\nu_g$ is rather large. As a result, precoders will tend to minimize the sum of MSE or RSI accordingly. Therefore, we can set a relatively large $\nu_g$ to help suppress SI before the receiver if there is not sufficient ASIC depth and set a relatively small $\nu_g$ to minimize the precoding errors if there is sufficient ASIC depth. According to experiments, we can set the value of $\nu_g$ based on the realized ASIC depth $l_g$ as $\nu_g = 10^{-l_{g}/5}$. \\
From the MSE expressions, we can see that the main difficulty of the minimization problem $(P.2)$ is that the optimized variables are entangled, leading to a non-convex problem. The non-convex constrained optimization problem may not be converted to a convex form with simple manipulations. Thus, we propose a two-stage approach to solve it, where the original joint optimization problem is decomposed into two convex sub-problems: interference management through beamforming and power allocation. The two sub-problems are relatively easy to be solved through standard solutions, but the decomposition leads to a suboptimal solution to the joint optimization problem.

\section{Joint Power Allocation and Interference Management Algorithm}
\label{sec:interferenceMng}
In the interference management stage, we will fix the power coefficients to form the first sub-problem regarding beamforming matrix optimization. However, the objective function in \eqref{eq:opt_problem} is still not jointly convex to the transmitting and receiving beamforming matrices (i.e., precoders and combiners). Thus, we further divide this sub-problem into combiner updating and precoder updating stages.

\vspace{-2mm}
\subsection{Receiving Beamforming}
With fixed precoding matrices and power coefficients, the combining matrices are optimized to minimize the sum of MSE. The analog RSI power is not affected by the combiners, so it is removed from the objective function, and the combiner optimization problem is formulated as
\begin{equation}
\label{eq:opt_subproblem1.1}
      \begin{split}
        (S.1.1) \; \min_{\left\{ \mathbf{U} \right\}_{\forall k,g}} & \sum_{g=1}^{G}\sum_{k=1}^{K_g^d} \varepsilon_{k_g^d}  \left ( \alpha, \gamma, \mathbf{V}, \mathbf{U} \right )  \\
        & \quad + \sum_{g=1}^{G}\sum_{k=1}^{K_g^u} \varepsilon_{k_g^u}  \left ( \alpha, \gamma, \mathbf{V}, \mathbf{U} \right ).
     \end{split}
\end{equation}
The objective function (i.e., the sum of MSE) is convex and differentiable to the combining matrices with other variables fixed, and there is no constraint associated with combining matrices. Thus, we can differentiate the objective function with respect to $\mathbf{U}_{k_g^d}$ and $\mathbf{U}_{k_g^u}$, and set the derivatives to zero, then the optimal combining matrices for problem $(S.1.1) $ in \eqref{eq:opt_subproblem1.1} are given as
\begin{align}
\label{eq:slt_combiner_dl}
    \mathbf{U}_{k_g^d}^{*}  & = \alpha_{k_g^d} \mathbf{C}_{k_g^d}^{-1} \hat{\mathbf{H}}_{k_g^d,g} \mathbf{V}_{k_g^d} , \\
\label{eq:slt_combiner_ul}
    \mathbf{U}_{k_g^u}^{*} & = \gamma_{k_g^u} \mathbf{C}_{g}^{-1} \hat{\mathbf{H}}_{g,k_g^u} \mathbf{V}_{k_g^u} ,
\end{align}
where $\mathbf{C}_{k_g^d}$ and $\mathbf{C}_{g}$ are covariance matrices of the received signals at the downlink user $k_g^d$ and the $g^{th}$ base station given in Equations \eqref{eq:Ckgd} and \eqref{eq:Cg} in Appendix \ref{appendix:simp_mse}.

\vspace{-3mm}
\subsection{Transmitting Beamforming}
With fixed combining matrices and power coefficients, the sub-problem regarding precoder optimization is formulated as
\begin{align}
\label{eq:opt_subproblem1.2}
         (S.1.2) & \; \min_{\left\{  \mathbf{V} \right\}_{\forall k,g}} \sum_{k=1}^{K_g^d} \varepsilon_{k_g^d}  \left ( \alpha, \gamma, \mathbf{V}, \mathbf{U} \right ) +  \sum_{g=1}^{G} \nu_g \epsilon_{\text{rsi}, g}  \left ( \mathbf{V}\right ) \nonumber \\
         & \quad + \sum_{g=1}^{G}\sum_{k=1}^{K_g^u} \varepsilon_{k_g^u}  \left ( \alpha, \gamma, \mathbf{V}, \mathbf{U} \right )   \\
                        \label{eq:s1opt_c1}
                        \text{s.t.} \; \; & (C.1) \;\; \sum_{k=1}^{K_g^d} \alpha_{k_g^d}^2  tr\left ( \mathbf{V}_{k_g^d} \mathbf{V}_{k_g^d}^H \right ) \leq P_{bs}, \; \forall \; g  \\
                        \label{eq:s1opt_c2}
                        & (C.2) \;\; \gamma_{k_g^u}^2  tr\left ( \mathbf{V}_{k_g^u} \mathbf{V}_{k_g^u}^H \right ) \leq P_{ue}, \; \forall \; k,g.
\end{align}
To solve the constrained optimization problem, we need to augment the objective function with a weighted sum of the constraint functions \cite{convexOpt}, yielding the Lagrange function as Equation \eqref{eq:opt_lagrange_s1.2}, where $\varpi_{g} \geq 0$ and $ \varpi_{k_g^u} \geq 0$ are Lagrange multipliers associated with constraints \eqref{eq:s1opt_c1} and \eqref{eq:s1opt_c2}. 
\begin{figure*}
    \normalsize
    \setcounter{mytempeqncnt}{\value{equation}}
    \begin{equation}
    \label{eq:opt_lagrange_s1.2}
      \begin{split}
     \mathcal{L} & \left ( \mathbf{V}, \nu, \varpi \right )  = \sum_{g=1}^{G}\sum_{k=1}^{K_g^d} \varepsilon_{k_g^d}  \left ( \alpha, \gamma, \mathbf{V}, \mathbf{U} \right ) + \sum_{g=1}^{G}\sum_{k=1}^{K_g^u} \varepsilon_{k_g^u} \left ( \alpha, \gamma, \mathbf{V}, \mathbf{U} \right ) + \sum_{g=1}^{G}  \varpi_{g} \left ( \sum_{k=1}^{K_g^d} \alpha_{k_g^d}^2 tr \left ( \mathbf{V}_{k_g^d} \mathbf{V}_{k_g^d}^H \right ) - P_{bs} \right )  \\
   & \quad + \sum_{g=1}^{G} \sum_{k=1}^{K_g^u} \varpi_{k_g^u} \left ( \gamma_{k_g^u}^2 tr \left ( \mathbf{V}_{k_g^u} \mathbf{V}_{k_g^u}^H \right ) - P_{ue} \right ) +  \sum_{g=1}^{G} \nu_g \sum_{k=1}^{K_g^d} \alpha_{k_g^d}^2 tr\left ( \mathbf{H}_{g,g} \mathbf{V}_{k_g^d} \mathbf{V}_{k_g^d}^H \mathbf{H}_{g,g}^H + \kappa_{bs}\mathbf{H}_{g,g} \mathcal{D}\left ( \mathbf{V}_{k_g^d} \mathbf{V}_{k_g^d}^H \right ) \mathbf{H}_{g,g}^H \right ) 
         \end{split}
  \end{equation}
    \hrulefill
    \vspace*{4pt}
\end{figure*}
Differentiate the Lagrange function \eqref{eq:opt_lagrange_s1.2} with respect to $\mathbf{V}_{k_g^d}$ and $\mathbf{V}_{k_g^u}$, and set the derivatives to zero, the optimal precoding matrices for the sub-problem $(S.1.2)$ are given as
\begin{equation}
\label{eq:slt_precoder_dl}
    \begin{split}
   \mathbf{V}_{k_g^d}^{*} & =  \frac{1}{\alpha_{k_g^d}} \Big ( \mathbf{\Omega}_{g} + \nu_g \left ( \mathbf{H}_{g,g}^H \mathbf{H}_{g,g} + \kappa_{bs} \mathcal{D} \left ( \mathbf{H}_{g,g}^H \mathbf{H}_{g,g} \right ) \right )  \\
   & \quad + \varpi_{g}^{*} \mathbf{I} \Big )^{-1} \mathbf{H}_{k_g^d,g}^H \mathbf{U}_{k_g^d} ,
   \end{split}
\end{equation}
\begin{align}
\label{eq:slt_precoder_ul}
   \mathbf{V}_{k_g^u}^{*} =  \frac{1}{\gamma_{k_g^d}} \left ( \mathbf{\Omega}_{k_g^u} + \varpi_{k_g^u}^{*} \mathbf{I} \right )^{-1} \mathbf{H}_{g,k_g^u}^H \mathbf{U}_{k_g^u} .
\end{align}
where $\mathbf{\Omega}_{k_g^d}$ and $\mathbf{\Omega}_{k_g^u}$ represent the interference-plus-distortions matrices for associated users, which can be denoted using the function $\mathcal{F}_1 (\cdot)$ defined in Appendix \ref{appendix:functions} as
\begin{align*}
    \mathbf{\Omega}_{g} & = \sum_{j=1}^{G} \sum_{i=1}^{K_j^d} \mathcal{F}_1 ( \hat{\mathbf{H}}_{i_g^d,g}^H, \mathbf{U}_{i_j^d} ) +\sum_{j=1}^{G} \sum_{i=1}^{K_j^u} \mathcal{F}_1 ( \hat{\mathbf{H}}_{j,g}^H, \mathbf{U}_{i_j^u} )  , \\
    \mathbf{\Omega}_{k_g^u} & = \sum_{j=1}^{G} \sum_{i=1}^{K_j^d}  \mathcal{F}_1 ( \hat{\mathbf{H}}_{i_g^d,k_g^u}^H, \mathbf{U}_{i_j^d} ) + \sum_{j=1}^{G} \sum_{i=1}^{K_j^u} \mathcal{F}_1 ( \hat{\mathbf{H}}_{j,k_g^u}^H, \mathbf{U}_{i_j^u} ) .
\end{align*}
Then, we need to find $\varpi_{g}^{*}\geq 0$ and $\varpi_{k_g^u}^{*} \geq 0$ satisfying power constraints \eqref{eq:s1opt_c1} and \eqref{eq:s1opt_c2}, which could be found through Bisection searching. However, calculating the precoder and its norm with different values of $\varpi_{g}$ and $\varpi_{k_g^u}$ for each iteration during the Bisection searching will yield high computational complexity. To reduce the complexity, we use singular value decomposition (SVD) to convert the expressions of the transmit power into a scalar form. Performing SVD to $ \mathbf{\Omega}_{g} + \nu \left ( \mathbf{H}_{g,g}^H \mathbf{H}_{g,g} + \kappa_{bs} \mathcal{D} \left ( \mathbf{H}_{g,g}^H \mathbf{H}_{g,g} \right ) \right ) $, which is Hermitian, we have $ \mathbf{\Omega}_{g} + \nu \left ( \mathbf{H}_{g,g}^H \mathbf{H}_{g,g} + \kappa_{bs} \mathcal{D} \left ( \mathbf{H}_{g,g}^H \mathbf{H}_{g,g} \right ) \right )  = \mathbf{Q}_{k_g^d} \mathbf{D}_{k_g^d} \mathbf{Q}_{k_g^d}^H$, where $\mathbf{Q}_{k_g^d} \mathbf{Q}_{k_g^d}^H = \mathbf{I}$. Then, Equation \eqref{eq:slt_precoder_dl} can be written as
\begin{align*}
    \mathbf{V}_{k_g^d}^{*} & =  \frac{1}{\alpha_{k_g^d}} \left ( \mathbf{Q}_{k_g^d} \mathbf{D}_{k_g^d} \mathbf{Q}_{k_g^d}^H + \mathbf{Q}_{k_g^d} \varpi_{g}^{*} \mathbf{Q}_{k_g^d}^H \right )^{-1} \mathbf{H}_{k_g^d,g}^H \mathbf{U}_{k_g^d} \\
    &  = \frac{1}{\alpha_{k_g^d}} \mathbf{Q}_{k_g^d} \left ( \mathbf{D}_{k_g^d} + \varpi_{g}^{*} \mathbf{I} \right )^{-1} \mathbf{Q}_{k_g^d}^H \mathbf{H}_{k_g^d,g}^H \mathbf{U}_{k_g^d}.
\end{align*}
Let $\mathbf{G}_{k_g^d} = \mathbf{Q}_{k_g^d}^H \mathbf{H}_{k_g^d,g}^H \mathbf{U}_{k_g^d} \mathbf{U}_{k_g^d}^H \mathbf{H}_{k_g^d,g} \mathbf{Q}_{k_g^d}$, the power constraint in Equation \eqref{eq:s1opt_c1} can be written as
\begin{align*}
    \sum_{k=1}^{K_g^d} \alpha_{k_g^d}^2 tr\left ( \mathbf{V}_{k_g^d}^{*} (\mathbf{V}_{k_g^d}^{*})^H \right )  = \sum_{k=1}^{K_g^d} \sum_{n=1}^{N_{bs}} \frac{[ \mathbf{G}_{k_g^d} ]_{n,n} }{ \left ( [ \mathbf{D}_{k_g^d} ]_{n,n} + \varpi_{g} \right )^2} \leq P_{bs} 
\end{align*}
Similarly, performing SVD to $\mathbf{\Omega}_{k_g^u}$ such that $\mathbf{\Omega}_{k_g^u} = \mathbf{Q}_{k_g^u} \mathbf{D}_{k_g^u} \mathbf{Q}_{k_g^u}^H$, and let $\mathbf{G}_{k_g^u} = \mathbf{Q}_{k_g^u}^H \mathbf{H}_{g, k_g^u}^H \mathbf{U}_{k_g^u} \mathbf{U}_{k_g^u}^H \mathbf{H}_{g,k_g^u} \mathbf{Q}_{k_g^u}$. The power constraint in Equation \eqref{eq:s1opt_c2} can be written as
\begin{equation}
     \sum_{n=1}^{N_{ue}} \frac{[ \mathbf{G}_{k_g^u} ]_{n,n} }{ \left ( [ \mathbf{D}_{k_g^u} ]_{n,n} + \varpi_{k_g^u} \right )^2} \leq P_{ue} .
\end{equation}
Since $\mathbf{G}_{k_g^d}$ and $\mathbf{D}_{k_g^d}$ (or $\mathbf{G}_{k_g^u}$ and $\mathbf{D}_{k_g^u}$) are fixed while updating the precoding matrices, we only need to calculate their values once before the Bisection searching, then only scalar calculations are required during the process of Bisection searching.

\subsection{Power Allocation}
In this section, we will fix the beamforming matrices and derive the solutions to the power coefficients. The sub-problem with regard to power allocation policy optimization is formulated as
\begin{align}
\label{eq:opt_subproblem2}
         (S.2) & \; \min_{ \left\{  \alpha, \gamma \right\}_{\forall k,g}} \sum_{k=1}^{K_g^d} \varepsilon_{k_g^d}  \left ( \alpha, \gamma, \mathbf{V}, \mathbf{U} \right )  + \sum_{g=1}^{G}\sum_{k=1}^{K_g^u} \varepsilon_{k_g^u}  \left ( \alpha, \gamma, \mathbf{V}, \mathbf{U} \right )  \\
                        \label{eq:sopt_c1}
                        & \text{s.t.} \; \;  (C.1) \;\; \sum_{k=1}^{K_g^d} \alpha_{k_g^d}^2  tr\left ( \mathbf{V}_{k_g^d} \mathbf{V}_{k_g^d}^H \right ) \leq P_{bs}, \; \forall \; g  \\
                        \label{eq:sopt_c2}
                        &  \; \; \; \; \;\;  (C.2) \;\; \gamma_{k_g^u}^2  tr\left ( \mathbf{V}_{k_g^u} \mathbf{V}_{k_g^u}^H \right ) \leq P_{ue}, \; \forall \; k,g.
\end{align}
Similarly, we augment the objective function with a weighted sum of the constraint functions, yielding the Lagrange function
\begin{equation}
\label{eq:opt_lagrange_s2}
    \begin{split}
    \mathcal{L} & \left ( \alpha, \gamma, \lambda \right ) \\
    & = \sum_{g=1}^{G}\sum_{k=1}^{K_g^d} \varepsilon_{k_g^d}  \left ( \alpha, \gamma, \mathbf{V}, \mathbf{U} \right ) + \sum_{g=1}^{G}\sum_{k=1}^{K_g^u} \varepsilon_{k_g^u} \left ( \alpha, \gamma, \mathbf{V}, \mathbf{U} \right ) \\
   & \quad + \sum_{g=1}^{G}  \lambda_{g} \left ( \sum_{k=1}^{K_g^d} \alpha_{k_g^d}^2 tr \left ( \mathbf{V}_{k_g^d} \mathbf{V}_{k_g^d}^H \right ) -P_{bs} \right ) \\
   &  \quad + \sum_{g=1}^{G} \sum_{k=1}^{K_g^u} \lambda_{k_g^u} \left ( \gamma_{k_g^u}^2 tr \left ( \mathbf{V}_{k_g^u} \mathbf{V}_{k_g^u}^H \right ) -P_{ue} \right ), 
    \end{split}
\end{equation}
where $\lambda_g \geq 0$ and $ \lambda_{k_g^u} \geq 0$ are Lagrange multipliers associated with constraints \eqref{eq:sopt_c1}-\eqref{eq:sopt_c2}. The Lagrange function is convex to the power coefficients. Differentiate the Lagrange function \eqref{eq:opt_lagrange_s2} with respect to $\alpha_{k_g^d}$ and $\gamma_{k_g^u}$ and set the derivatives to zero, the optimal power coefficients for the sub-problem $(S.2)$ are given as
\begin{align}
\alpha_{k_{g}^{d}}^{*} & =\frac{\mathcal{R}\left\{ tr\left(\mathbf{U}_{k_{g}^{d}}^{H}\hat{\mathbf{H}}_{k_{g}^{d},g}\mathbf{V}_{k_{g}^{d}}\right)\right\} }{\chi_{k_{g}^{d}}+\lambda_{g}^{*}  tr\left ( \mathbf{V}_{k_g^d} \mathbf{V}_{k_g^d}^H \right ) }\label{eq:slt_powCoe_dl},\\
\gamma_{k_{g}^{u}}^{*} & =\frac{\mathcal{R}\left\{ tr\left(\mathbf{U}_{k_{g}^{u}}^{H}\hat{\mathbf{H}}_{g,k_{g}^{u}}\mathbf{V}_{k_{g}^{u}}\right)\right\} }{\chi_{k_{g}^{u}}+\lambda_{k_{g}^{u}}^{*}  tr\left ( \mathbf{V}_{k_g^u} \mathbf{V}_{k_g^u}^H \right )}\label{eq:slt_powCoe_ul},
\end{align}
where $\chi_{k_g^d}$ and $\chi_{k_g^u}$ can be denoted using the function $\mathcal{F}_{2}\left( \cdot \right)$ defined in Appendix \ref{appendix:functions} as \\
\begin{equation}
    \begin{split}
    \chi_{k_{g}^{d}} & = \sum_{j=1}^{G}\sum_{i=1}^{K_{j}^{d}}\mathcal{F}_{2}\left(\mathbf{U}_{i_{j}^{d}},\hat{\mathbf{H}}_{i_{j}^{d},g},\mathbf{V}_{k_{g}^{d}}\right) \\
    & \quad +\sum_{j=1}^{G}\sum_{i=1}^{K_{j}^{u}}\mathcal{F}_{2}\left(\mathbf{U}_{i_{j}^{u}},\hat{\mathbf{H}}_{j,g},\mathbf{V}_{k_{g}^{d}}\right),
    \end{split}
\end{equation}
\begin{equation}
    \begin{split}
    \chi_{k_{g}^{u}} & = \sum_{j=1}^{G}\sum_{i=1}^{K_{j}^{d}}\mathcal{F}_{2}\left(\mathbf{U}_{i_{j}^{d}},\hat{\mathbf{H}}_{i_{j}^{d},k_{g}^{u}},\mathbf{V}_{k_{g}^{u}}\right) \\
    & \quad +\sum_{j=1}^{G}\sum_{i=1}^{K_{j}^{u}}\mathcal{F}_{2}\left(\mathbf{U}_{i_{j}^{u}},\hat{\mathbf{H}}_{j,k_{g}^{u}},\mathbf{V}_{k_{g}^{u}}\right).
    \end{split}
\end{equation}
To find the Lagrange multiplier $\lambda_{k_g^u}$, we can differentiate the Lagrange function \eqref{eq:opt_lagrange_s2} with respect to $\lambda_{k_g^u}$ and set the derivative to zero as
\begin{align*}
    &\gamma_{k_g^u}^2 tr \left ( \mathbf{V}_{k_{g}^{u}} \mathbf{V}_{k_{g}^{u}}^H \right ) -P_{ue} \\
    & = \left(\frac{\mathcal{R}\left\{ tr\left(\mathbf{U}_{k_{g}^{u}}^{H}\hat{\mathbf{H}}_{g,k_{g}^{u}}\mathbf{V}_{k_{g}^{u}}\right)\right\} }{\chi_{k_{g}^{u}}+\lambda_{k_{g}^{u}}^{*} tr \left ( \mathbf{V}_{k_{g}^{u}} \mathbf{V}_{k_{g}^{u}}^H \right )}\right)^{2} tr \left ( \mathbf{V}_{k_{g}^{u}} \mathbf{V}_{k_{g}^{u}}^H \right ) -P_{ue} = 0,
\end{align*}
yielding the optimal Lagrange multiplier as 
\begin{align*}
   &\;\; \lambda_{k_{g}^{u}}^{*} \\
   & = \max\left[0,-\frac{\chi_{k_{g}^{u}}}{ tr \left ( \mathbf{V}_{k_{g}^{u}} \mathbf{V}_{k_{g}^{u}}^H \right )}+\frac{\mathcal{R}\left\{ tr\left(\mathbf{U}_{k_{g}^{u}}^{H}\hat{\mathbf{H}}_{g,k_{g}^{u}}\mathbf{V}_{k_{g}^{u}}\right)\right\} }{\sqrt{ tr \left ( \mathbf{V}_{k_{g}^{u}} \mathbf{V}_{k_{g}^{u}}^H \right ) P_{ue}}}\right] ,
\end{align*}
where $\max\left[0, x \right]$ guarantees $ \lambda_{k_{g}^{u}} \geq 0$ to strictly satisfy the constraint \eqref{eq:sopt_c2}. For $\lambda_{g}$, it can be obtained from the positive root of $\sum_{k=1}^{K_{g}^{d}}\left(\frac{\mathcal{R}\left\{ tr\left(\mathbf{U}_{k_{g}^{d}}^{H}\hat{\mathbf{H}}_{k_{g}^{d},g}\mathbf{V}_{k_{g}^{d}}\right)\right\} }{\chi_{k_{g}^{d}}+\lambda_{g}tr \left ( \mathbf{V}_{k_{g}^{d}} \mathbf{V}_{k_{g}^{d}}^H \right )}\right)^{2}-\frac{P_{bs}}{tr \left ( \mathbf{V}_{k_{g}^{d}} \mathbf{V}_{k_{g}^{d}}^H \right )} = 0$ according to the derivative, or $\lambda_{g}^{*} = 0$ if the positive root does not exist. However, it is difficult to derive the close-form expression of its root since the variable is in the denominator. Alternatively, $\lambda_{g}^{*}$ can be obtained using Bisection search within the search range of $\left[0,\bar{\lambda}_{g}\right]$, where $\bar{\lambda}_{g}$ is the upper bound of $\lambda_{g}$. The power constraints at the BS can be written as Equation \eqref{eq:powCst_1}, which gives Equation \eqref{eq:powCst_2}. Equation \eqref{eq:powCst_2} is in the form of $a\lambda_g^2 + 2b\lambda_g + c \leq 0$, where $a = \sum_{k=1}^{K_{g}^{d}} tr\left(\mathbf{V}_{k_{g}^{d}}\mathbf{V}_{k_{g}^{d}}^H\right)$, $b = \sum_{k=1}^{K_{g}^{d}} \chi_{k_{g}^{d}}$, and $c = \sum_{k=1}^{K_{g}^{d}}\frac{\chi_{k_{g}^{d}}^{2}}{tr\left(\mathbf{V}_{k_{g}^{d}}\mathbf{V}_{k_{g}^{d}}^H\right)} - \frac{1}{P_{bs}}\sum_{k=1}^{K_{g}^{d}}\mathcal{R}\left\{ tr\left(\mathbf{U}_{k_{g}^{d}}^{H}\hat{\mathbf{H}}_{k_{g}^{d},g}\mathbf{V}_{k_{g}^{d}}\right)\right\} ^{2}$. Thus, we can derive the upper bound as
\begin{align*}
    \lambda_{g}\leq  -\frac{b}{a} + \sqrt{\frac{b^2}{a^2} - \frac{c}{a}} =  \bar{\lambda}_g .
\end{align*}
\begin{figure*}
    \normalsize
    \setcounter{mytempeqncnt}{\value{equation}}
    \begin{equation}
     \label{eq:powCst_1}
      \begin{split}
       0 = P_{bs}-\sum_{k=1}^{K_{g}^{d}} & \left( \frac{\mathcal{R}\left\{ tr\left(\mathbf{U}_{k_{g}^{d}}^{H}\hat{\mathbf{H}}_{k_{g}^{d},g}\mathbf{V}_{k_{g}^{d}}\right)\right\} }{\chi_{k_{g}^{d}} +\lambda_{g}tr\left(\mathbf{V}_{k_{g}^{d}}\mathbf{V}_{k_{g}^{d}}^H\right)}\right)^{2} tr\left(\mathbf{V}_{k_{g}^{d}}\mathbf{V}_{k_{g}^{d}}^H\right)  \leq P_{bs}-\frac{\sum_{k=1}^{K_{g}^{d}}\mathcal{R}\left\{ tr\left(\mathbf{U}_{k_{g}^{d}}^{H}\hat{\mathbf{H}}_{k_{g}^{d},g}\mathbf{V}_{k_{g}^{d}}\right)\right\} ^{2} tr\left(\mathbf{V}_{k_{g}^{d}}\mathbf{V}_{k_{g}^{d}}^H\right)}{\sum_{k=1}^{K_{g}^{d}}\left(\chi_{k_{g}^{d}}+\lambda_{g}tr\left(\mathbf{V}_{k_{g}^{d}}\mathbf{V}_{k_{g}^{d}}^H\right)\right)^{2}} \\
       & =P_{bs} -\frac{\sum_{k=1}^{K_{g}^{d}}\mathcal{R}\left\{ tr\left(\mathbf{U}_{k_{g}^{d}}^{H}\hat{\mathbf{H}}_{k_{g}^{d},g}\mathbf{V}_{k_{g}^{d}}\right)\right\} ^{2} tr\left(\mathbf{V}_{k_{g}^{d}}\mathbf{V}_{k_{g}^{d}}^H\right)}{\lambda_{g}^{2} \sum_{k=1}^{K_{g}^{d}} \left ( tr\left(\mathbf{V}_{k_{g}^{d}}\mathbf{V}_{k_{g}^{d}}^H\right) \right )^{2}+2\lambda_{g} \sum_{k=1}^{K_{g}^{d}} tr\left(\mathbf{V}_{k_{g}^{d}}\mathbf{V}_{k_{g}^{d}}^H\right)\chi_{k_{g}^{d}}+\sum_{k=1}^{K_{g}^{d}}\chi_{k_{g}^{d}}^{2}}
         \end{split}
  \end{equation}
    \hrulefill
    \vspace*{4pt}
\end{figure*}
\begin{figure*}
    \normalsize
    \setcounter{mytempeqncnt}{\value{equation}}
    \begin{equation}
     \label{eq:powCst_2}
      \begin{split}
       \lambda_{g}^{2} \sum_{k=1}^{K_{g}^{d}} tr\left(\mathbf{V}_{k_{g}^{d}}\mathbf{V}_{k_{g}^{d}}^H\right)+2\lambda_{g} \sum_{k=1}^{K_{g}^{d}} \chi_{k_{g}^{d}}+\sum_{k=1}^{K_{g}^{d}}\frac{\chi_{k_{g}^{d}}^{2}}{tr\left(\mathbf{V}_{k_{g}^{d}}\mathbf{V}_{k_{g}^{d}}^H\right)} - \frac{1}{P_{bs}}\sum_{k=1}^{K_{g}^{d}}\mathcal{R}\left\{ tr\left(\mathbf{U}_{k_{g}^{d}}^{H}\hat{\mathbf{H}}_{k_{g}^{d},g}\mathbf{V}_{k_{g}^{d}}\right)\right\} ^{2} \leq 0
         \end{split}
  \end{equation}
    \hrulefill
    \vspace*{4pt}
\end{figure*}

\vspace{-3mm}
\subsection{Iterative Processing}
\label{sec:solutions}
The solutions of beamforming matrices and power coefficients in Equations \eqref{eq:slt_combiner_dl}, \eqref{eq:slt_combiner_ul}, \eqref{eq:slt_precoder_dl}, \eqref{eq:slt_precoder_ul}, \eqref{eq:slt_powCoe_dl}, and \eqref{eq:slt_powCoe_ul} show inter-dependence on each other. Thus, we need an iterative algorithm to obtain the overall solution by updating them until they converge. We first initialize power coefficients to meet the power budgets of each node and randomly initialize beamforming matrices. Then, we continuously update the precoding matrices, power coefficients, and combining matrices in order. The iterative procedure stops if the loss function, i.e., the objective function in \eqref{eq:opt_problem}, does not decrease with iterations (the decreasing amount is less than the threshold) or the iteration time exceeds the limitation. This processing is summarized as Algorithm \ref{alg:iterative_JPAIM}. Note that, since the joint optimization problem is not jointly convex to the power coefficients and beamforming matrices, the algorithm can only converge to the local minimal closest to the initial point. Thus, the performance strongly depends on the initial point, which is similar to existing studies \cite{paulaWsr, jointMu}.
\begin{algorithm}[t]
	\caption{Iterative JPAIM Algorithm}
	\label{alg:iterative_JPAIM}
	\begin{algorithmic}[1]
		\STATE Initialize the power allocation coefficients as $\alpha_{k_g^d} = \sqrt{\frac{P_{bs}}{b_d K_g^d}}$ and $\gamma_{k_g^u} = \sqrt{\frac{P_{ue}}{b_u}}$ $\forall \, k,g$.
		\STATE Randomly initializes the precoding matrices and normalize.
		\REPEAT 
		\STATE Update the combining matrices $\left\{ \mathbf{U}_{k_g^d}, \mathbf{U}_{k_g^u} \right\}_{\forall k,g}$ as \eqref{eq:slt_combiner_dl} and \eqref{eq:slt_combiner_ul}.
		\STATE Calculate and record the current sum of MSE as $\varepsilon^{(t)}$
		\STATE Update the precoding matrices $\left\{ \mathbf{V}_{k_g^d}, \mathbf{V}_{k_g^u} \right\}_{\forall k,g}$ as \eqref{eq:slt_precoder_dl} and \eqref{eq:slt_precoder_ul}.
		\STATE Update the power allocation coefficients $\left\{ \alpha_{k_g^d}, \gamma_{k_g^u} \right\}_{\forall k,g}$ as \eqref{eq:slt_powCoe_dl} and \eqref{eq:slt_powCoe_ul}.
		\UNTIL {$\varepsilon^{(t)} - \varepsilon^{(t-1)} <  \iota$ or $t>t_{m}$, where $\iota$ is the threshold and $t_{m}$ is the iteration limit.}
	\end{algorithmic}  
\end{algorithm}

\subsubsection{Convergence Behaviour}
This algorithm can keep reducing the objective function of the joint optimization problem (i.e., loss function) and converge to a local minimum. Let $\theta \left( \alpha,\beta,\mathbf{V},\mathbf{U} \right)$ denote the loss function. Within an iteration, one of the optimized variables is updated with others fixed, and the Lagrange function is convex to the target variable under this condition. Thus, we always obtain the optimal value of the target variable with other variables fixed, reducing the loss function at each step as:
\begin{align}
    \theta \left( \alpha,\gamma,\mathbf{V},\mathbf{U}^{(t+1)} \right) & \leq \theta \left( \alpha,\gamma,\mathbf{V},\mathbf{U}^{(t)} \right) \label{eq:conv_1}\\
    \theta \left( \alpha,\gamma,\mathbf{V}^{(t+1)},\mathbf{U} \right) & \leq \theta \left( \alpha,\gamma,\mathbf{V}^{(t)},\mathbf{U} \right) \label{eq:conv_2}  \\
    \theta \left( \alpha^{(t+1)},\gamma^{(t+1)},\mathbf{V},\mathbf{U} \right) & \leq \theta \left( \alpha^{(t)},\gamma^{(t)},\mathbf{V},\mathbf{U} \right)   \label{eq:conv_3}
\end{align}
After the $t^{th}$ iteration, we have $ \alpha^{(t)},\gamma^{(t)},\mathbf{V}^{(t)},\mathbf{U}^{(t)} $, and then update the variables to $\mathbf{U}^{(t+1)}, \mathbf{V}^{(t+1)}, \alpha^{(t+1)},\gamma^{(t+1)}$ in order. According to \eqref{eq:conv_1}-\eqref{eq:conv_3}, the loss function is guaranteed to be reduced at each iteration until it converges to a local optimal.

\subsubsection{Computational Complexity}
\label{subsec:complexity}
We measure the computational complexity by accounting for the required multiplication operations. The required multiplications of some basic operations are given in Table \ref{table:complexity1}, where $\mathbf{A}_1$ is of size $a \times b$ and $\mathbf{A}_2$ is of size $b \times c$.
\begin{table}[h]
\centering
\caption{Complexity of Operations\label{table:complexity1}}
\begin{tabular}{|c|c|}
\hline
\multicolumn{1}{|c|}{Operation}                                                       & \multicolumn{1}{|c|}{Number of multiplications}                                                                                     \\ \hline
$\mathbf{A}_1 \mathbf{A}_2$   &   $a \times b \times c$     \\ \hline
Eigen decomposition $\left ( \mathbf{A}_1 \right )$   & $\mathcal{O}\left (a^3 \right )$    \\ \hline
Matrix inverse $\left ( \mathbf{A}_1 \right )$   &  $\mathcal{O}\left ( a^3 \right )$    \\ \hline
$\mathcal{F}_1\left ( \mathbf{A}_1, \mathbf{A}_2 \right )$   &   $2a^2b+ab^2+b^2c+2abc+3a^2$     \\ \hline
\end{tabular}
\end{table}
According to this, the total number of multiplications of calculating beamforming matrices (taking $\mathbf{V}_{k_g^d}$ as an instance) can be given as
\begin{align*}
    M_v & = \underbrace{GK \large [ 3A_b^3 + A_b^2(2A_u+3N_s+6) + A_b(A_u^2 +2A_uN_s)}_{\mathbf{\Omega}_g} \\
    &\quad \underbrace{+ A_u^2N_s \large ]}_{\mathbf{\Omega}_g} + \underbrace{A_b^3}_{\mathbf{H}_{g,g}^{H} \mathbf{H}_{g,g}} + \underbrace{A_b^2A_u+A_bA_uN_s}_{\text{rest multiplication}} ,
\end{align*}
plus an Eigen decomposition and an inverse to a matrix of size $A_b \times A_b$, where the calculation process is divided into calculating $\mathbf{\Omega}_g$, $mathbf{H}_{g,g}^{H} \mathbf{H}_{g,g}$, and rest matrix multiplication, and we assume $K_g^d=K_g^u=K,\forall \; g$, $N_{bs}=M_{bs}=A_b$, $N_{ue}=M_{ue}=A_u$, and $b_d=b_u=N_s$ for simplicity. Usually, we will have $A_b\gg A_u \geq N_s$. To update the corresponding power coefficient $\alpha_{k_g^d}$, the complexity mainly comes from calculating $\chi_{k_g^d}$ and the rest matrix multiplication. Thus, the total number of multiplication operations required to compute $\mathbf{\alpha}_{k_g^d}$ is
\begin{align*}
    M_{\alpha} &= \underbrace{GK[ 2A_b^3 + A_b^2(A_u+5N_s+2)+A_b(A_u^2 + 4A_uN_s}_{\chi_{k_g^d}} \\
    & \quad \underbrace{+N_s^2)+A_u^2N_s+2A_uN_s+2N_s^2]}_{\chi_{k_g^d}} \\
    & \quad + \underbrace{A_b^2N_s + A_b(A_uN_s+N_s^2)}_{\text{rest multiplication}} .
\end{align*}
Therefore, the number of multiplication operations required for one iteration of the JPAIM algorithm is given as $M_v+M_{\alpha}$ plus an Eigen decomposition and inverse operation to a matrix of size $A_b \times A_b$, which is in the order of $GKA_b^3$ such that
\begin{align}
    C_{\text{JPAIM}} & = \mathcal{O} \left ( GKA_b^3  \right ) .
\end{align}
The computational complexity is in the same order as the MWSR algorithm proposed in \cite{paulaWsr}. The complexity analysis of the MWSR algorithm is not detailed in this paper since it is out of the scope, but it can be analyzed similarly based on the solution given in \cite{paulaWsr}. Although the complexities of the two algorithms are in the same order, JPAIM needs $M_{\alpha}+A_b^3$ more multiplications, which almost double the time complexity of a single iteration. Therefore, the JPAIM algorithm needs at least half the convergence time to reduce the overall complexity.

\begin{table*}[]
\centering
\caption{Simulation Setting\label{table:sim_setting}}
\begin{tabular}{|l|c|}
\hline
\multicolumn{1}{|c|}{Parameters}                                                       & \multicolumn{1}{|c|}{Values}                                                                                     \\ \hline
Number of antennas    &  $N_{bs} = M_{bs} = 16$, $N_{ue} = M_{ue} = 2$ if not specified     \\ \hline
Cells   &    hexagon cells with 200m of inter-cite distance and 10m of a minimum base station to users distance  \\ \hline
LOS probability and pathloss   &  models for UMi scenarios in \cite{3gppChannel}     \\ \hline
Channel matrices   &  \begin{tabular}[c]{@{}c@{}} $\mathbf{H}_{B,A}= \sqrt{l_{B,A}} \, \mathbf{N}$ if the LOS probability $\geq0.5$ \\ $\mathbf{H}_{B,A}= \sqrt{l_{B,A}} \left ( \sqrt{\frac{k_{B,A}}{k_{B,A}+1}} \, \mathbf{I} + \sqrt{\frac{1}{k_{B,A}+1}} \, \mathbf{N} \right )$ if the LOS probability $<0.5$  \\ ($\mathbf{N}$ has elements independently and identically distributed as $\mathcal{CN}(0,1)$)  \end{tabular}    \\ \hline
Transmit power budgets   &    24dBm for BSs and 23dBm for UEs   \\ \hline
Thermal noise density   &     -174dBm/Hz   \\ \hline
Noise figure  &  13dB for BSs and 9dB for UEs     \\ \hline
Number of effective bits of DACs/ADCs  &   $b=12$ if not specified    \\ \hline
Channel uncertainty factor   &  $\hat{\sigma}_{B,A}^2 = \varrho \left| \mathbf{H}_{B,A} \right|_F^2$ $\forall A,B$, where $\varrho = -120$dB     \\ \hline
\end{tabular}
\end{table*}
\section{Simulation Results}
\label{sec:results}
We consider OFDM modulation with 15kHz subcarrier spacing and 10 MHz bandwidth centered at 2.5GHz, and other parameters are given in Table \ref{table:sim_setting}. The system parameters are chosen based on 3GPP specifications \cite{3gppTxrx, 3gppChannel} and our preliminary work \cite{paulaWsr, luo2023beam}. The ASIC depth realized by any other methods (e.g., antenna isolation as in \cite{antennaSic} or RF cancellers as in \cite{hlAccess}) will be directly reflected by the pathloss of the SI channel regardless of the implementation methods. We assume the pathloss without any ASIC is 0dB due to the proximity of IBFD transceivers \cite{surveySic}. Therefore, $x$ dB of realized ASIC by other techniques means $l_g = x$ dB $\forall g$ in simulations, where $l_g$ is defined in Section \ref{subsec:bf_asic}. We do not presume a realized ASIC depth but evaluate the algorithms' performance at different ASIC depths. In the later simulations, we consider the ideal case, i.e., perfect ASIC with $l_g=120$ dB, to demonstrate the characteristics of the algorithm itself, but this does not mean that our algorithm requires such perfect ASIC.

Since the throughput maximization for MCMU networks is not convex and the performance depends on the initial point, we run 1,000 times Monte Carlo simulations for each scenario and evaluate the performance via the averaged values. The realization is randomly generated at each step of the loop. For a fair comparison, the simulation settings, e.g., user location, channel matrices, initial points, etc., remain identical for the two algorithms in each realization in 1,000 iterations of the Monte Carlo simulations, and the two algorithms will be performed separately.

\begin{figure*}
      \centering
      \includegraphics[width=1.0\textwidth]{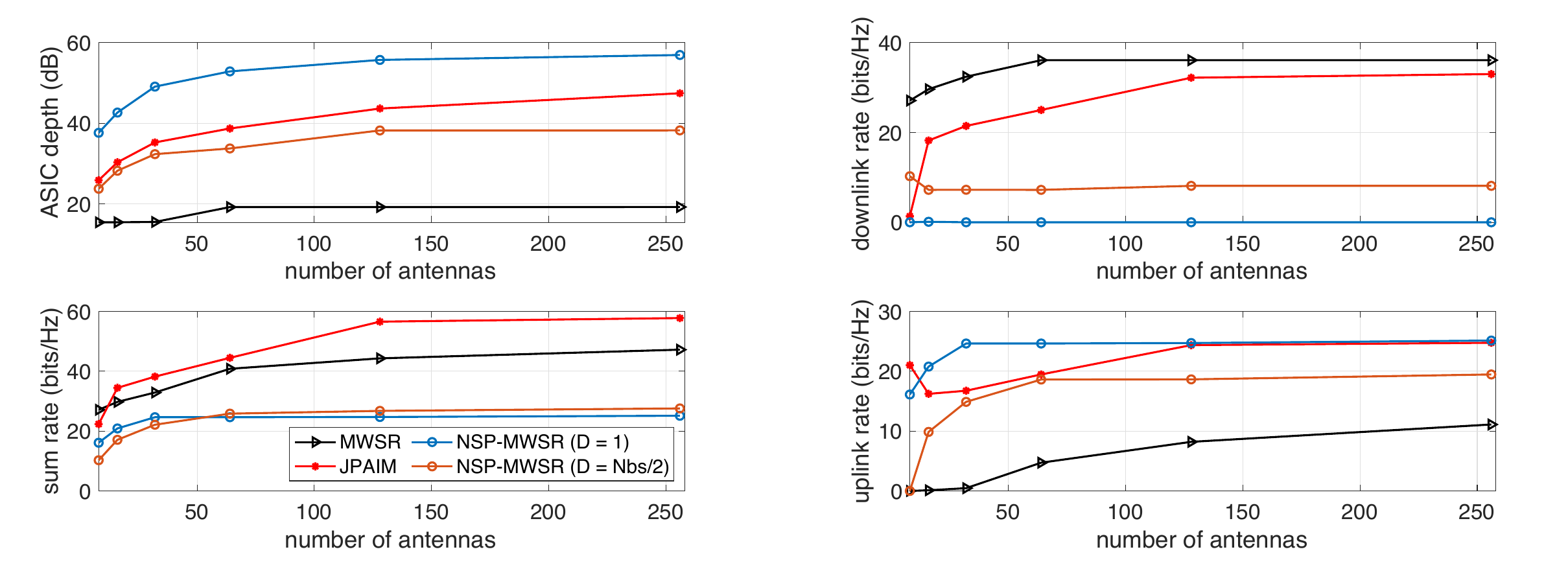}
      \caption{The ASIC capability and achievable spectral efficiency versus the number of transmitting and receiving antennas of the FD node.}
      \label{fig:sic_cap}
\end{figure*}
\subsection{SIC Capability}
Fig. \ref{fig:sic_cap} shows the achievable ASIC depth and SE versus the number of transmitting and receiving antennas of IBFD transceivers using different beamforming schemes. In order to demonstrate the ASIC capability of precoders, we consider the scenario that no ASIC is realized by other techniques since precoders will tend to suppress other significant interference rather than SI if it has been efficiently suppressed. In this case, the MWSR precoder optimizes the interference management for the downlink payload regardless of SI, resulting in a large received SI  power to saturate the IBFD receiver and significantly degrade the uplink rate. More receiving antennas compensate for the uplink rate reduction due to more degrees of freedom of combiners, managing the interference better for the uplink payload. Combining the null-space projection \cite{tbSic} with MWSR beamforming as explained in Appendix \ref{appendix:NSP-MMSE} can enhance its ASIC capability. The uplink rate is increased at the cost of the downlink rate degradation. The trade-off can be made by adjusting the dimensions of the subspace chosen for projection, i.e., $D$ (defined in Appendix \ref{appendix:NSP-MMSE}), which also determines the ASIC capability. The disadvantage of NSP-MWSR is that the downlink rate decreases significantly with increasing $D$. In contrast, our proposed JPAIM beamforming has less impact on the downlink rate with enhanced ASIC capability. It can achieve higher SE than other schemes and simultaneously support effective downlink and uplink communications with $\geq 16$ antennas at the IBFD transceiver. It can offer $>30$dB of ASIC depth with $\geq16$ transmitting antennas at the IBFD node and can be improved by more antennas. When the antenna array is small (e.g., 8-antenna arrays), the power allocation tends to reduce the transmit power of the downlink payload to maximize the sum rate by maximizing the uplink rate. With enlarging antenna arrays (e.g., 16-antenna arrays), it is able to provide effective ASIC to enable efficient IBFD communications. However, it introduces interference for the uplink payload so that the uplink rate goes down at first.

\vspace{-3mm}
\subsection{Convergence Behaviour}
Fig. \ref{fig:conv} shows the convergence behavior of the JPAIM algorithm compared to the MWSR algorithm proposed in \cite{paulaWsr}. Since the joint power allocation and beamforming optimization problem is not jointly convex, the convergence behavior and achievable sum rate depend on the initial points as the algorithms converge to the local minimum closest to the initial point. Therefore, we ran 1,000 times Monte Carlo simulations and calculated the average value to show their performance. It shows that the JPAIM algorithm converges faster than the MWSR algorithm under both conditions (i.e., insufficient and sufficient ASIC realized). The JPAIM algorithm reduces the iteration time to converge by at least half compared to the MWSR Algorithm, which meets the condition analyzed in Section \ref{subsec:complexity} to reduce the overall time complexity. Besides, both algorithms converge faster with insufficient ASIC realized.
\begin{figure}
      \centering
      \includegraphics[width=0.5\textwidth]{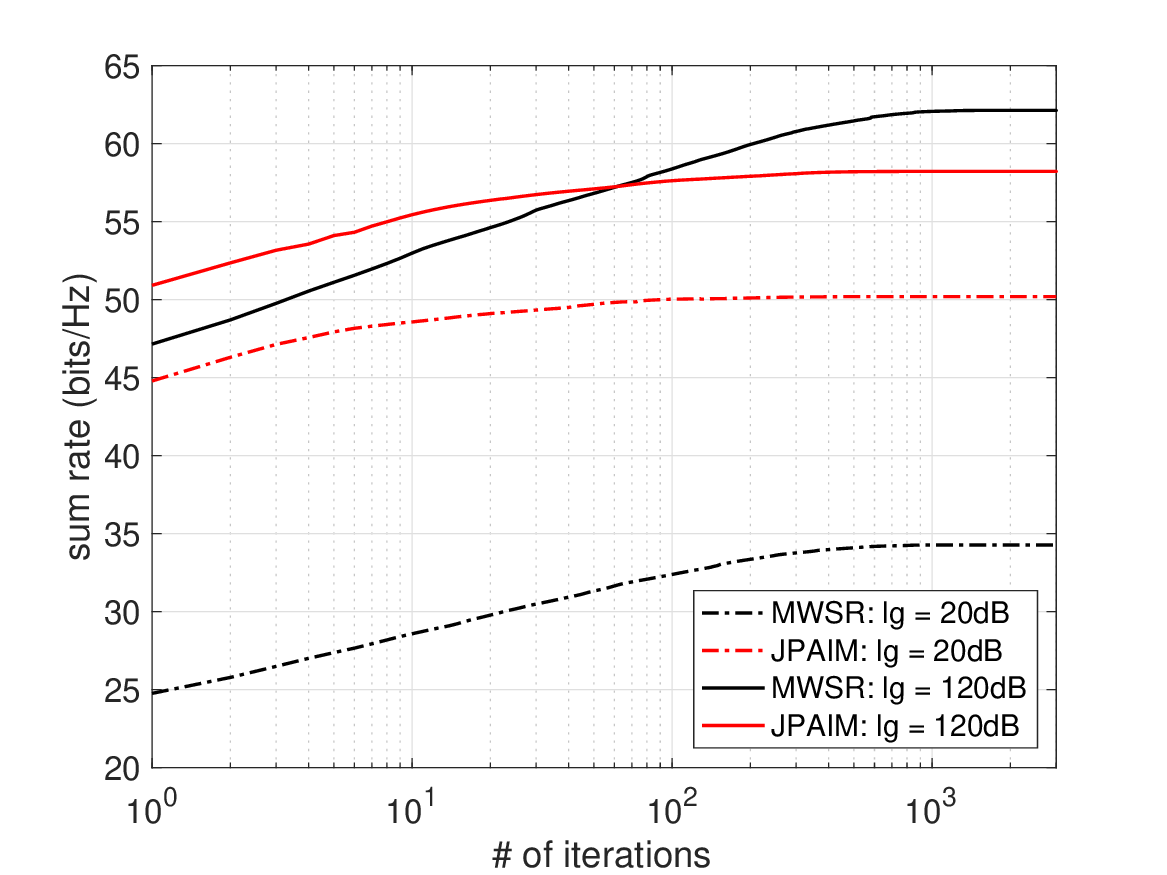}
      \caption{Convergence behavior (x-axis is in the log scale).\label{fig:conv}}
\end{figure}

\vspace{-3mm}
\subsection{Improvement with Insufficient ASIC}
In this section, we evaluate the performance of the JPAIM algorithm in terms of achievable SE (i.e., sum rate), as the most attractive factor of IBFD is its potential to double the SE of its HD counterpart. Fig. \ref{fig:se_com} shows the achievable SE of the JPAIM algorithm versus realized ASIC depths compared to the MWSR algorithm. When the ASIC depth realized by other techniques is insufficient (i.e., $<60$dB), the proposed JPAIM achieves higher SE than MWSR. This is attractive as other ASIC techniques have incredibly high complexity and cost in MIMO systems, as we stated in Section \ref{sec:intro}. Active antenna isolation, which has feasible implementation complexity, can achieve around 30 dB of ASIC in practice \cite{antennaSic}. In contrast, deeper ASIC needs to be realized by the much more complex RF cancellation, which is almost practically prohibited in MIMO systems. In this case, the JPAIM algorithm can improve the SE by $42.9\%$ in IBFD compared to HD, while the MWSR algorithm cannot. This value can be improved to $60.9\%$ with further digital processing if the received SI has already been within the dynamic range of receivers. In contrast, the MWSR must require RF cancellers to achieve the IBFD gain due to its limited ASIC capability, as illustrated in Fig. \ref{fig:sic_cap}.
\begin{figure}
      \centering
      \includegraphics[width=0.5\textwidth]{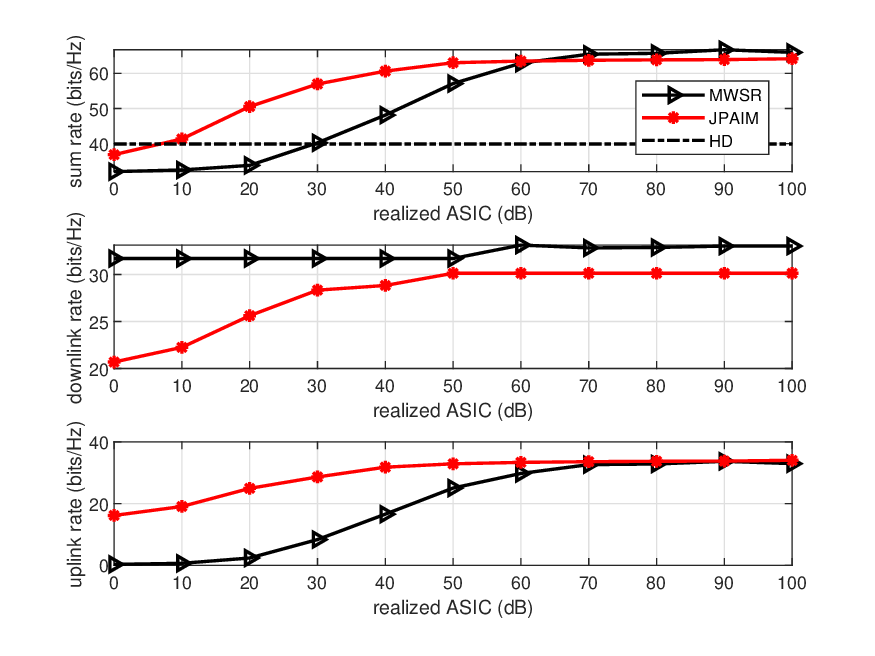}
      \caption{Achievable spectral efficiency vs realized ASIC depth (i.e., $l_g$).}
      \label{fig:se_com}
\end{figure}

\vspace{-3mm}
\subsection{Time Efficiency}
\begin{figure*}
    \centering
    \subfigure[Sum rate and computation time versus the number of users.]{
    \begin{minipage}[t]{0.5\linewidth}
    \centering
    \includegraphics[width=1.0\linewidth]{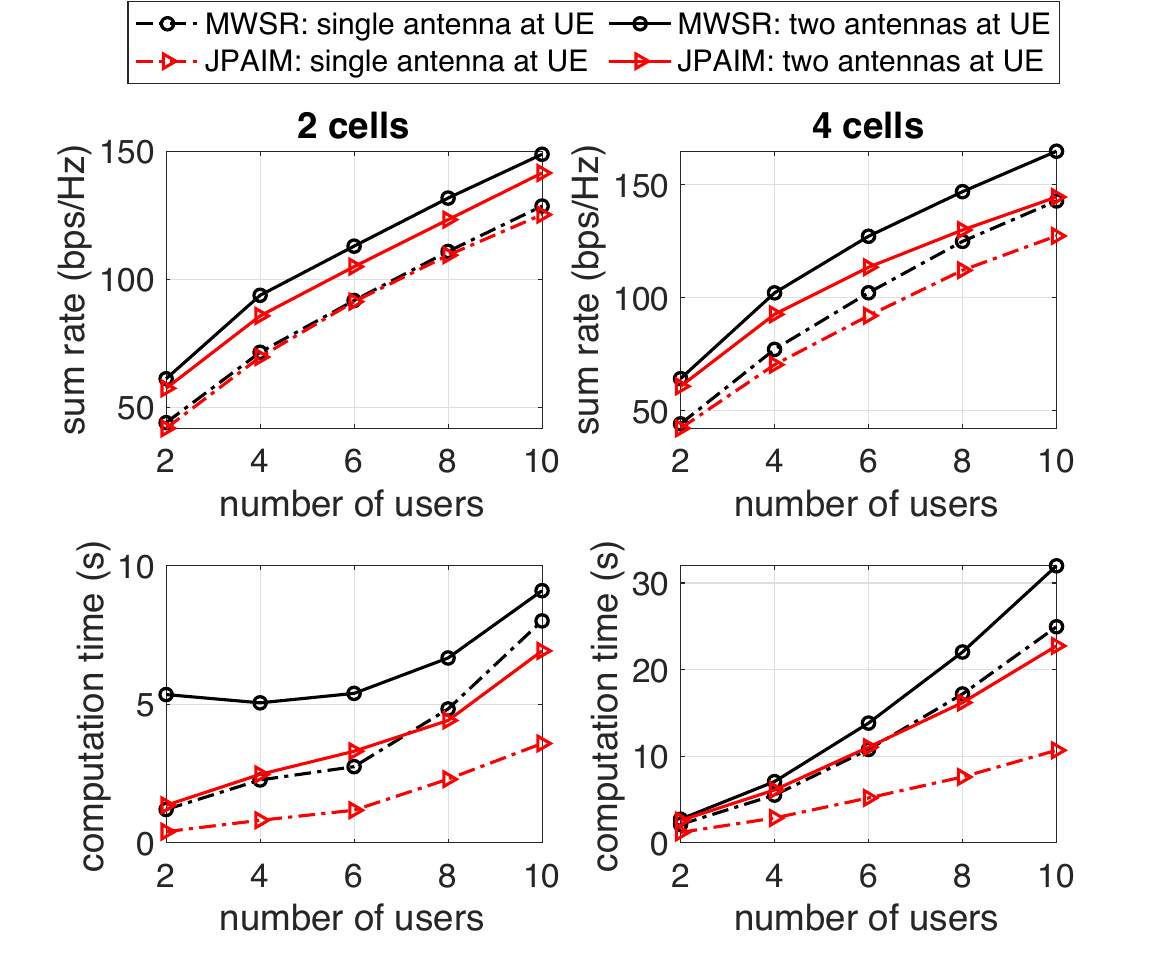}
    \label{fig:cc_1}
    \end{minipage}%
    }%
    \subfigure[Sum rate reduction and computation time save percentages.]{
    \begin{minipage}[t]{0.5\linewidth}
    \centering
    \includegraphics[width=1.0\linewidth]{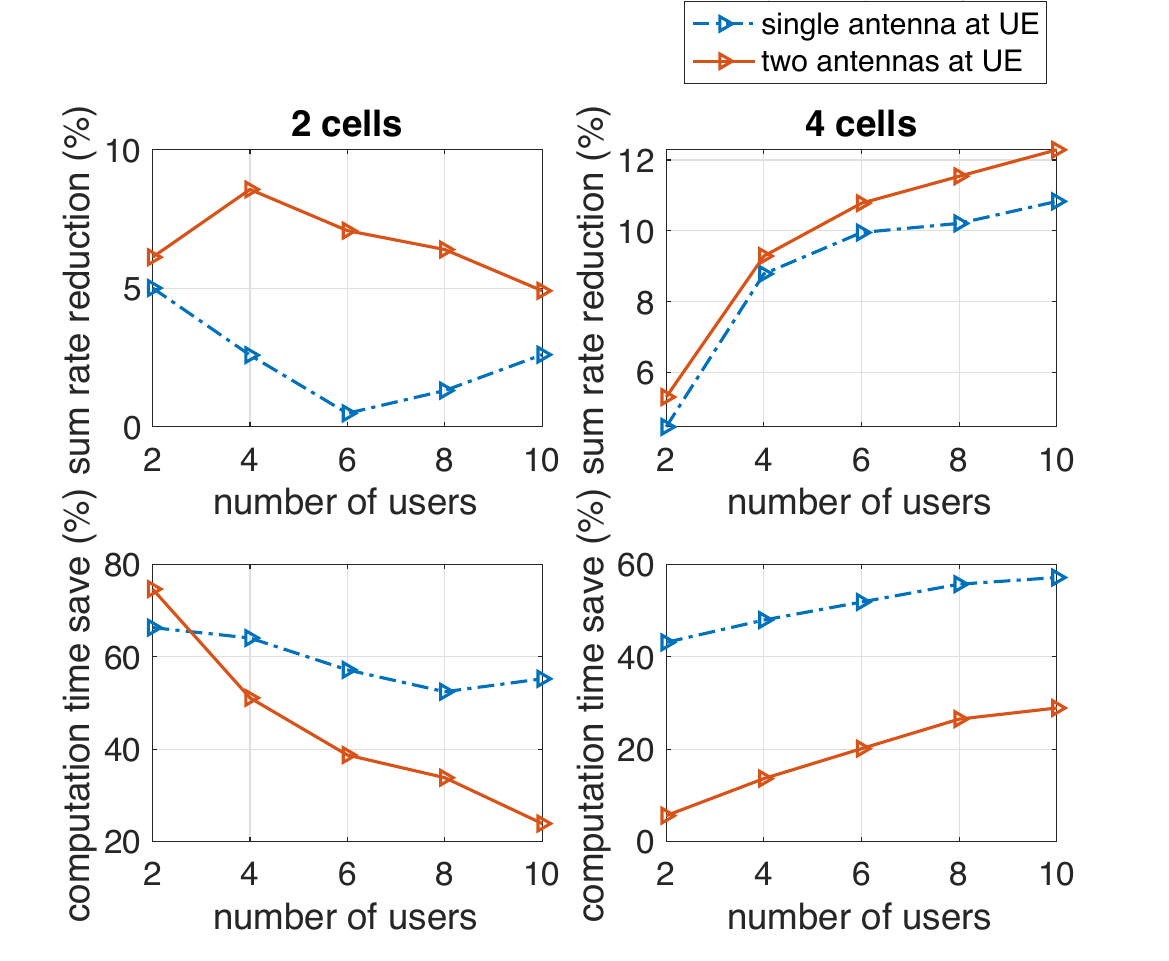}
    \label{fig:cc_2}
    \end{minipage}%
    }%
    \centering
    \caption{Performance comparison of JPAIM and MWSR in terms of SE and time complexity. \label{fig:cc}}
\end{figure*}
\begin{figure}
      \centering
      \includegraphics[width=0.5\textwidth]{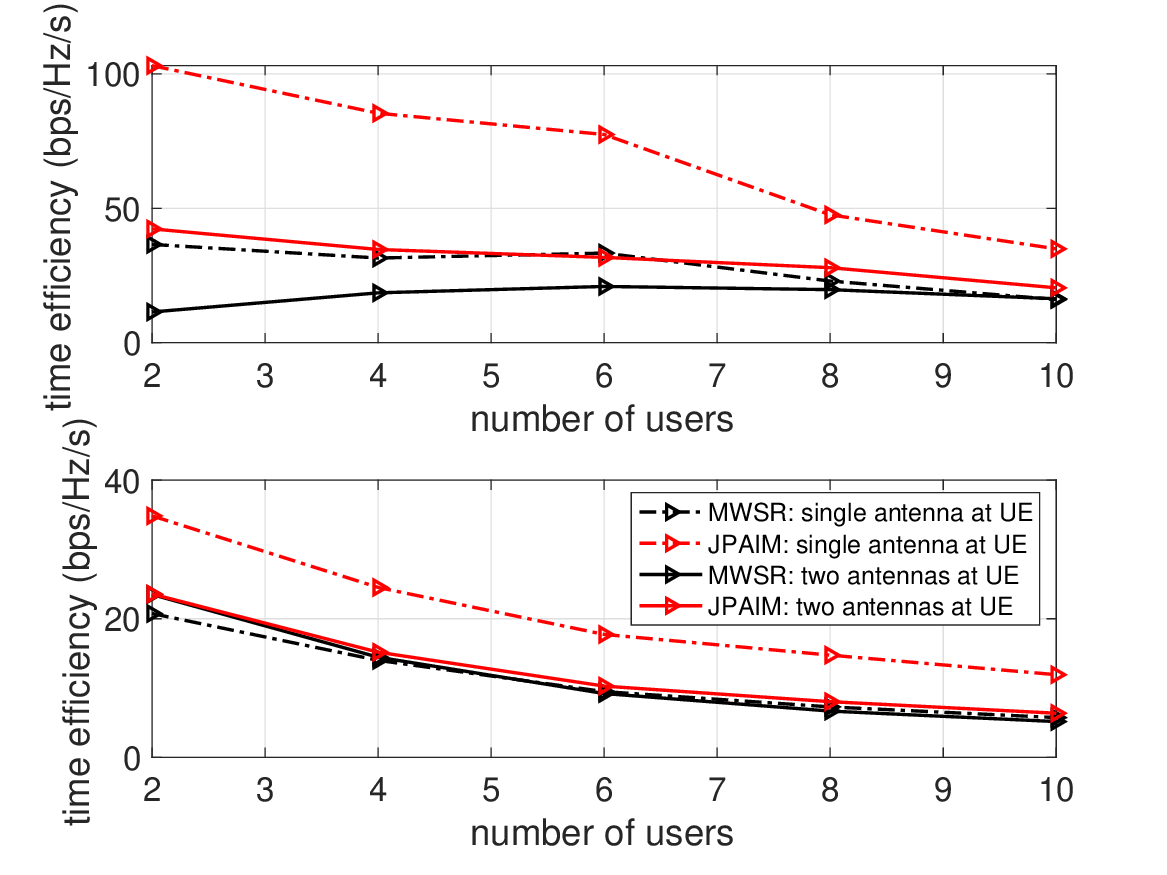}
      \caption{Time efficiency of the proposed JPAIM algorithm.\label{fig:efficiency}}
\end{figure}
Results in Fig. \ref{fig:se_com} suggest that MWSR can achieve higher SE than JPAIM with sufficient ASIC applied. This is reasonable since MWSR yields a joint power allocation and beamforming solution while JPAIM decomposes it into two sub-problems to solve. However, the decomposition also brings benefits in terms of time efficiency (i.e., $=\frac{\textup{sum rate}}{\textup{computation time}}$). In order to evaluate the performance of the proposed algorithm in terms of time efficiency, we compare the SE and time complexity of the two algorithms in networks of different scales. We use Monte Carlo simulations to compare 2-cell and 4-cell networks with 2-10 users.

Fig. \ref{fig:cc_1} shows the performance difference between the two algorithms in terms of sum rate and average computation time of a realization with different network sizes, i.e., different numbers of cells, users, and antennas at users. Fig. \ref{fig:cc_2} shows the corresponding sum rate reduction and computation save percentages of JPAIM compared to MWSR, and Fig. \ref{fig:efficiency} shows their time efficiency. The results show that JPAIM achieves a sum rate close to the one achieved by MWSR, but it takes much less computation time. The sum rate loss generally increases with increasing numbers of cells and users, and it is more significant with multi-antenna users. For single-antenna users, JPAIM saves at least $40\%$ of the computation time at the cost of $<10\%$ sum rate reduction for most cases. Although the benefits are compromised with two-antenna users, the computation time saved is always higher than the sum rate reduction. 

Fig. \ref{fig:efficiency} also illustrates that JPAIM achieves higher time efficiency than MWSR under all conditions. This indicates that JPAIM has a lower time complexity in practical implementations than MWSR at the cost of an acceptable sum rate loss. The benefit of JPAIM is significant with single-antenna users, which is meaningful for practical cellular network deployment since many UE devices still use a single antenna. This can be understood since our JPAIM algorithm uses a coarse power allocation scheme, while MWSR uses a refined power allocation. Therefore, MWSR can achieve higher SE than JPAIM. However, refined power allocation will drastically increase the computational complexity, making it less time-efficient than our proposed algorithm. When the interference is not complicated (i.e., with fewer cells and users), the refined power allocation is not so important, while a coarse power allocation can greatly reduce the time complexity due to its fast convergence characteristics, so JPAIM has a much higher time efficiency than MWSR. When the interference becomes complicated (i.e., in enlarging networks), a refined power allocation will significantly improve network capacity, while a coarse power allocation can reduce the complexity relatively less (as shown in Fig. \ref{fig:cc}). In addition, our algorithm demonstrates a significant time efficiency improvement for single-antenna users, while the improvement is less pronounced for multi-antenna users. The reason is that multi-antenna users can perform beamforming to increase channel capacity, which requires user terminals to also perform iterative calculations, thereby significantly increasing the computational complexity. This is not necessary for the JPAIM algorithm based on coarse power allocation when single antenna users are considered, while the refined power allocation-based MWSR algorithm still requires iterative calculations for power allocation of the data streams. Therefore, the time complexity benefit of JPAIM compared to MWSR is greatly reduced when there are multiple-antenna users (see Fig. \ref{fig:cc_2}), resulting in the overall time efficiency is not obvious.

\vspace{-3mm}
\subsection{Robustness to Channel Uncertainty and Transceiver HWIs}
\begin{figure}
      \centering
      \includegraphics[width=0.5\textwidth]{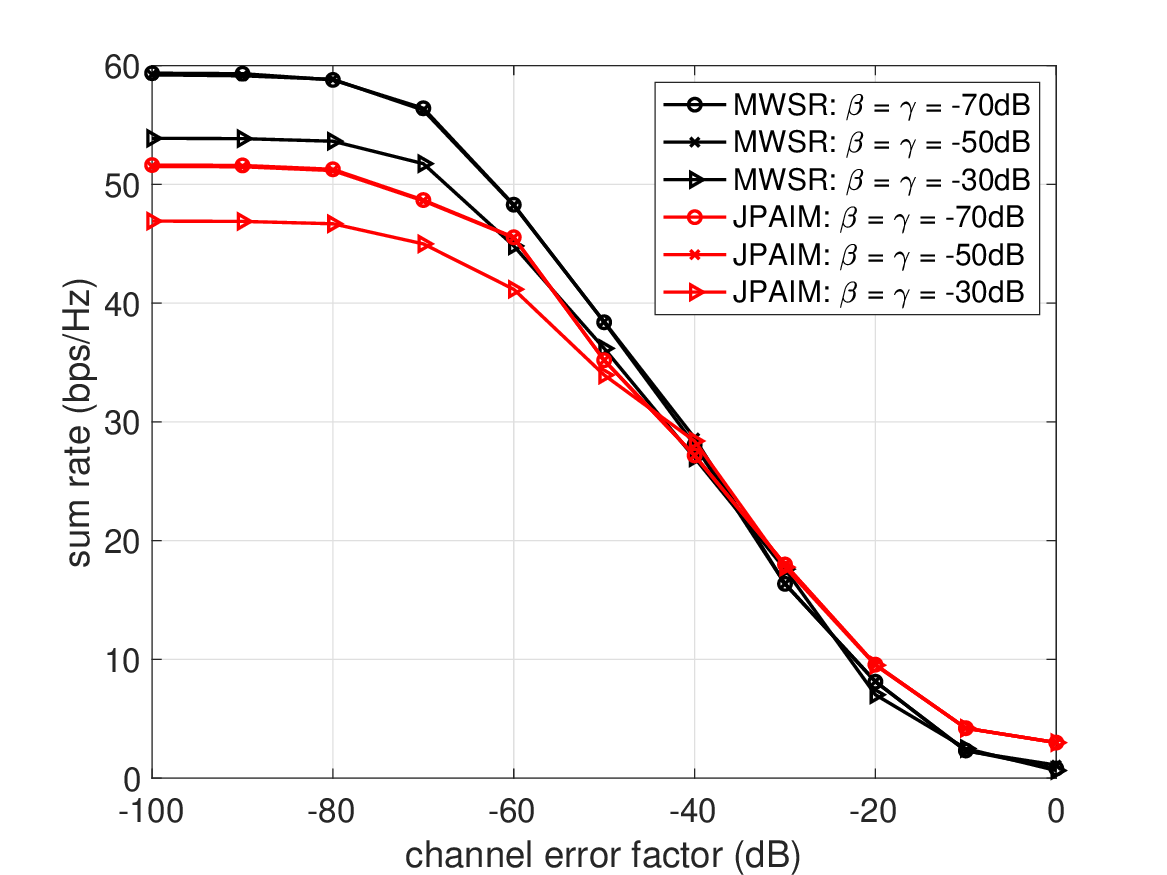}
      \caption{Robustness to channel uncertainty and transceiver HWIs.\label{fig:robustness}}
\end{figure}
In our study, the criteria for robustness are defined based on the algorithm's ability to maintain consistent performance levels under varying channel conditions and hardware impairments. That is, the algorithm can be considered robust if its performance does not degrade dramatically with increasing channel uncertainty and hardware impairments regardless of the highest sum rate. Fig. \ref{fig:robustness} shows the achievable sum rate variation of the JPAIM algorithm compared to the MWSR algorithm against the channel error factor $\varrho$. It demonstrates the robustness of JPAIM to channel uncertainty. It shows that significant channel uncertainty has less impact on JPAIM than MWSR. The benefits come from the fact that the effects of channel uncertainty are included in our algorithm as detailed in Appendix \ref{appendix:chError}. In addition, it is also robust to the transceiver hardware impairments that reducing the dynamic range of transceivers from 70dB to 50dB has no noticeable effect on the achievable sum rate. Although a smaller dynamic range (e.g., 30dB) of transceivers reduces the achievable sum rate, it is acceptable as the reduction is less than $10\%$. The robustness to transceiver HWIs is similar for MSER and JPAIM since they both consider the HWIs in their algorithm designs.

\section{Conclusions}
\label{sec:conclusions}
In this paper, we have formulated and solved an MSE minimization problem regarding power allocation and beamforming with hardware limits. We have revealed that downlink precoders have to sufficiently suppress SI before the receiver (i.e., without combiners); otherwise, the uplink communication rate will be naturally reduced due to the limited dynamic range of receivers. To this end, we have enhanced the ASIC capability of precoders by adding a constraint on the received RSI power and have demonstrated the advantages of our method over existing NSP-based methods. The formulated problem has been converted to an MSE plus RSI minimization problem and then decomposed into two sub-problems to be solved. The closed-form solutions to the sub-problems are derived, and the overall solution is obtained by an iterative algorithm. It has been demonstrated that our JPAIM algorithm could achieve 42.9\% of the IBFD gain in terms of spectral efficiency in a 3GPP-specified multi-cell multi-user network with feasible antenna isolation only, inspiring a low-cost but efficient IBFD cellular network implementation. Then, we have evaluated its performance under ideal SIC conditions (i.e., sufficient ASIC depth) by comparing it to the existing MWSR algorithm \cite{paulaWsr}, showing it takes much less computation time due to its faster convergence speed at the cost of acceptable sum rate loss. The benefit of our proposed JPAIM algorithm is significant with single-antenna users that it saves at least $40\%$ of the computation time at the cost of $<10\%$ sum rate reduction. In addition, we have proved that it is robust to channel uncertainty and transceiver hardware impairments due to the robust design.

Future work could consider more practical scenarios, e.g., with limited CSI due to the limited backhaul resources.

\section*{Acknowledgments}
The work was supported in part by the research grant from Huawei Technologies (Sweden) AB.

\appendices
\section{Effects of Channel Uncertainty}
\label{appendix:chError}
In this appendix, we give the statistics of the errors caused by the channel uncertainty at the base station and user equipment. The errors caused by the channel uncertainty at the $g^{th}$ BS and the downlink user $k_g^d$ can be written as
\begin{align}
    \mathbf{e}_{g} & = \mathbf{\Delta}_{g,k_g^u} \mathbf{x}_{k_g^u} + \mathop{\sum^{G}\sum^{K_i^u}}_{i,j\neq k,g} \mathbf{\Delta}_{g,i_j^u} \mathbf{x}_{i_j^u} +  \sum_{j\neq g}^{G} \mathbf{\Delta}_{g,j} \mathbf{x}_j , \\
    \mathbf{e}_{k_g^d} & = \mathbf{\Delta}_{k_g^d,g} \mathbf{x}_{k_g^d} + \mathop{\sum^{G}\sum^{K_i^d}}_{i,j\neq k,g} \mathbf{\Delta}_{k_g^d,j} \mathbf{x}_{i_j^d} + \sum_{j=1}^{G} \sum_{i=1}^{K_j^u} \mathbf{\Delta}_{k_g^d,i_j^u} \mathbf{x}_{i_j^u} .
\end{align}
For any two specific nodes $A$ and $B$, we can calculate the covariance of the associated errors as
\begin{align}
    Cov \left( \mathbf{\Delta}_{B,A} \mathbf{x}_{A} \right) & = \mathbb{E} \left\{ \mathbf{\Delta}_{B,A} \mathbf{x}_{A} \mathbf{x}_{A}^H \mathbf{\Delta}_{B,A}^H \right\} \\
    & = \mathbb{E}_{\mathbf{\Delta}_{B,A}} \left\{ \mathbf{\Delta}_{B,A} \mathbf{T}_{A} \mathbf{\Delta}_{B,A}^H \right\}  \stackrel{(a)}{=} \tilde{\sigma}_{B,A}^2 tr \left( \mathbf{T}_{A} \right) \mathbf{I} \nonumber,
\end{align}
where $(a)$ comes from following derivation:
\begin{align}
        \mathbb{E}_{\mathbf{\Delta}_{B,A}} & \left\{ \left[ \mathbf{\Delta}_{B,A} \mathbf{T}_{A} \mathbf{\Delta}_{B,A}^H \right]_{m,n} \right\} \nonumber \\
        & =  \mathbb{E}_{\Delta} \left \{ \sum_{k=1}^{N_{\Delta}} \sum_{j=1}^{N_{\Delta}} \Delta _{mk} T_{kj} \Delta_{nj}^* \right \} \nonumber \\
        & = \sum_{k=1}^{N_{\Delta}} \sum_{j=1}^{N_{\Delta}} \mathbb{E}_{\Delta} \left \{ \Delta _{mk} T_{kj} \Delta_{mk}^* \right \} \delta_{mn} \delta_{kj} \\
        & \stackrel{(c)}{=} \sum_{k=1}^{N_{\Delta}} \mathbb{E}_{\Delta} \left| \Delta _{mk} \right|^2 T_{kk} \delta_{mn} = \tilde{\sigma}_{B,A}^2 tr\left( \mathbf{T}_{A} \right)\delta_{mn} \nonumber,
\end{align}
where $N_{\Delta}$ denotes the number of columns of matrix $\mathbf{\Delta}_{B,A}$, $T_{kj}$ denotes the element at the $k^{th}$ row and $j^{th}$ column of matrix $\mathbf{T}_{A}$, and $\Delta_{mk}$ denotes the element at the $m^{th}$ row and $k^{th}$ column of matrix $\mathbf{\Delta}_{B,A}$. $\delta$ represents the correlation coefficient, and $\delta_{kj} = 0, \forall \; k \neq j$, $\delta_{kk}=1, \forall \;k$, $\delta_{mn}=0,\forall \; m\neq n$, and $\delta_{mm}=1, \forall \;m$. Thus, the covariance matrices of the errors are given as
\begin{align}
    Cov & \left( \mathbf{e}_{g} \right) =  \underbrace{ \sum_{j=1}^{G} \sum_{i=1}^{K_j^u} \tilde{\sigma}_{g,i_j^u}^2 tr \left( \mathbf{T}_{i_j^u} \right) +  \sum_{j\neq g}^{G} \tilde{\sigma}_{g,j}^2 tr \left( \mathbf{T}_j \right)}_{\hat{\sigma}_g^2} \mathbf{I}_{M_{bs}} \\
    Cov & \left( \mathbf{e}_{k_g^d} \right) \\
    & =  \underbrace{ \sum_{j=1}^{G} \sum_{k=1}^{K_j^d} \tilde{\sigma}_{k_g^d,j}^2 tr \left( \mathbf{T}_{i_j^d} \right) + \sum_{j=1}^{G} \sum_{i=1}^{K_j^u} \tilde{\sigma}_{k_g^d,i_j^u}^2 tr \left( \mathbf{T}_{i_j^u} \right)}_{\hat{\sigma}_{k_g^d}^2} \mathbf{I}_{M_{ue}} \nonumber .
\end{align}

\section{Proof of Lemma}
\label{appendix:lemma_proof}
The only difference between the two formulations (i.e., problems $(P.1)$ and $(P.2)$) is that the RSI power-related terms appear in different places, one is in the constraint condition, and the other is in the objective function. Since the RSI power is only related to the downlink precoders, this difference does not affect the solution to other variables. The Lagrange function of the two problems with respect to the downlink precoding matrices $\mathbf{V}_{k_g^d}$ can be given as
\begin{equation}
\label{eq:proof1}
    \mathcal{L}_{P_1} = \mathbf{\Sigma}_{L} \left ( \mathbf{V} \right ) +  \sum_{g=1}^{G} \nu_g \epsilon_{\text{rsi}, g}\left ( \mathbf{V}\right ) ,
\end{equation}
\begin{equation}
\label{eq:proof2}
    \mathcal{L}_{P_2} = \mathbf{\Sigma}_{L} \left ( \mathbf{V} \right ) +  \sum_{g=1}^{G} \varpi_{\text{rsi}, g} \left ( \epsilon_{\text{rsi}, g}\left ( \mathbf{V}\right ) - \bar{\epsilon}_{\text{rsi}, g} \right ), 
\end{equation}
where $\varpi_{\text{rsi}, g}$ is the Lagrange multiplier; $\mathbf{\Sigma}_{L} \left ( \mathbf{V} \right )$ denotes the objective and power constraint-related terms given as
\begin{equation}
    \begin{split}
        \mathbf{\Sigma}_{L} \left ( \mathbf{V} \right ) & = \sum_{g=1}^{G}\sum_{k=1}^{K_g^d} \varepsilon_{k_g^d}  \left ( \mathbf{V}  \right ) + \sum_{g=1}^{G}\sum_{k=1}^{K_g^u} \varepsilon_{k_g^u} \left ( \mathbf{V}  \right ) \\
        & \quad + \sum_{g=1}^{G}  \varpi_{g} \left ( \sum_{k=1}^{K_g^d} \alpha_{k_g^d}^2 tr \left ( \mathbf{V}_{k_g^d} \mathbf{V}_{k_g^d}^H \right ) - P_{bs} \right ) .
    \end{split}
\end{equation}
The optimal solutions to $\mathbf{V}_{k_g^d}$ are obtained by deriving the Lagrange function \eqref{eq:proof1} or \eqref{eq:proof2} with respect to $\mathbf{V}_{k_g^d}$ and set the derivatives to zero, which will be identical if $\nu_g = \varpi_{\text{rsi}, g}$. Thus, optimization problems $(P.1)$ and $(P.2)$ share the same solutions (i.e., $(P.1)$ and $(P.2)$ are equivalent) with $\nu_g = \varpi_{\text{rsi}, g}$. It should be noted that the optimal value of the Lagrange multiplier $\varpi_{\text{rsi}, g}$ depends on the tolerable RSI power we set.

\section{Simplicities for MSE Expressions}
\label{appendix:simp_mse}
In this appendix, we give some simplicities used for MSE expressions.
\begin{equation}
    tr \left ( \mathbb{E} \left \{ \mathbf{s}_{k_g^d} \mathbf{s}_{k_g^d}^H \right \} \right ) = tr \left ( \mathbf{I}_{b_d} \right ) = b_d ;
\end{equation}
\begin{equation}
    tr \left ( \mathbb{E} \left \{ \mathbf{s}_{k_g^u} \mathbf{s}_{k_g^u}^H \right \} \right ) = tr \left ( \mathbf{I}_{b_u} \right ) = b_u;
\end{equation}
\begin{equation}
\begin{split}
    \mathbf{T}_{g} = \mathbb{E}\left \{ \mathbf{x}_g \mathbf{x}_g^H \right \} & = \mathbb{E}\left \{ \left ( \mathbf{V}_g \mathbf{A}_g \mathbf{s}_g + \mathbf{c}_g \right ) \left ( \mathbf{V}_g \mathbf{A}_g \mathbf{s}_g + \mathbf{c}_g \right )^H \right \} \\
    & = \mathbf{V}_g \mathbf{A}_g \mathbf{A}_g^H \mathbf{V}_g^H + \sigma_t^2 \mathcal{D} \left ( \mathbf{V}_g \mathbf{A}_g \mathbf{A}_g^H \mathbf{V}_g^H \right ) \\
    & =\sum_{k=1}^{K_g^d} \underbrace{\alpha_{k_g^d}^2 \left ( \mathbf{V}_{k_g^d} \mathbf{V}_{k_g^d}^H + \sigma_t^2 \mathcal{D} \left ( \mathbf{V}_{k_g^d} \mathbf{V}_{k_g^d}^H \right ) \right )}_{\mathbf{T}_{k_g^d}} ;
\end{split}
\end{equation}
\begin{equation}
\begin{split}
    \mathbf{T}_{k_g^u} & = \mathbb{E}\left \{ \mathbf{x}_{k_g^u} \mathbf{x}_{k_g^u}^H \right \} \\
    & = \mathbb{E}\left \{ \left ( \gamma_{k_g^u} \mathbf{V}_{k_g^u} \mathbf{s}_{k_g^u} + \mathbf{c}_{k_g^u} \right ) \left ( \gamma_{k_g^u} \mathbf{V}_{k_g^u} \mathbf{s}_{k_g^u} + \mathbf{c}_{k_g^u} \right )^H \right \} \\
    & = \gamma_{k_g^u}^2 \mathbf{V}_{k_g^u} \mathbf{V}_{k_g^u}^H + \sigma_t^2 \gamma_{k_g^u}^2 \mathcal{D} \left ( \mathbf{V}_{k_g^u} \mathbf{V}_{k_g^u}^H \right ) \\
    & = \gamma_{k_g^u}^2 \left ( \mathbf{V}_{k_g^u} \mathbf{V}_{k_g^u}^H + \sigma_t^2 \mathcal{D} \left ( \mathbf{V}_{k_g^u} \mathbf{V}_{k_g^u}^H \right ) \right ) ;
\end{split}
\end{equation}
\begin{equation}
\label{eq:Ckgd}
\begin{split}
    \mathbf{C}_{k_g^d} & = \mathbb{E} \left \{ \mathbf{y}_{k_g^d} \mathbf{y}_{k_g^d}^H \right \} \\
    & = \sum_{j=1}^{G} \sum_{i=1}^{K_j^d} \hat{\mathbf{H}}_{k_g^d,j} \mathbf{T}_{i_j^d} \hat{\mathbf{H}}_{k_g^d,j}^H + \sum_{j=1}^{G} \sum_{i=1}^{K_j^u} \hat{\mathbf{H}}_{k_g^d,i_j^u} \mathbf{T}_{i_j^u} \hat{\mathbf{H}}_{k_g^d,i_j^u}^H \\
    & \quad + \sigma_r^2 \mathcal{D} \bigg ( \sum_{j=1}^{G} \sum_{i=1}^{K_j^d} \hat{\mathbf{H}}_{k_g^d,j} \mathbf{T}_{i_j^d} \hat{\mathbf{H}}_{k_g^d,j}^H \\
    & \quad + \sum_{j=1}^{G} \sum_{i=1}^{K_j^u} \hat{\mathbf{H}}_{k_g^d,i_j^u} \mathbf{T}_{i_j^u} \hat{\mathbf{H}}_{k_g^d,i_j^u}^H \bigg ) + (\sigma_{k_g^d}^2 + \tilde{\sigma}_{k_g^d}^2)\mathbf{I}_{M_{ue}} .
\end{split}
\end{equation}
\begin{equation}
\label{eq:Cg}
\begin{split}
    \mathbf{C}_{g} & = \mathbb{E} \left \{ \mathbf{y}_{g} \mathbf{y}_{g}^H \right \} \\
    & = \sum_{j=1}^{G} \sum_{i=1}^{K_j^u} \hat{\mathbf{H}}_{g,i_j^u} \mathbf{T}_{i_j^u} \hat{\mathbf{H}}_{g,i_j^u}^H + \sum_{j=1}^{G} \hat{\mathbf{H}}_{g,j} \mathbf{T}_j \hat{\mathbf{H}}_{g,j}^H  \\
    & \quad + \sigma_r^2 \mathcal{D} \bigg ( \sum_{j=1}^{G} \sum_{i=1}^{K_j^u} \hat{\mathbf{H}}_{g,i_j^u} \mathbf{T}_{i_j^u} \hat{\mathbf{H}}_{g,i_j^u}^H \\
    & \quad + \sum_{j=1}^{G} \hat{\mathbf{H}}_{g,j} \mathbf{T}_j \hat{\mathbf{H}}_{g,j}^H \bigg ) + (\sigma_{g}^2 + \tilde{\sigma}_{g}^2) \mathbf{I}_{M_{bs}}.
\end{split}
\end{equation}

\section{NSP-based Method}
\label{appendix:NSP-MMSE}
In this appendix, we will explain how to apply the null-space projection to enhance the ASIC capability of precoders. The processing consists of two steps: 1) obtain the desired precoder; 2) project the precoder to the null-space of the SI channel. The first step could be done using any off-the-shelf precoding techniques, and we use the MWSR beamforming here for consistency, which can be obtained as in \cite{paulaWsr}. Let $ \mathbf{V}_{k_g^d}^{\dagger} $ denote the desired precoder obtained by the MWSR algorithm, then we project the desired precoder as $ \mathbf{\Gamma}_{g, D} \mathbf{\Gamma}_{g, D}^H \mathbf{V}_{k_g^d}^{\dagger}$ \cite{tbSic}, where $ \mathbf{\Gamma}_{g, D}$ is the chosen subspace that spans of the eigenvectors associated with the $D$ smallest eigenvalues of $\mathbf{H}_{g,g} \mathbf{H}_{g,g}^H + \kappa_{bs} \mathcal{D} \left ( \mathbf{H}_{g,g} \mathbf{H}_{g,g}^H \right )$.

\section{Functions for Explicit Expressions}
\label{appendix:functions}
In this appendix, we will define two functions for explicit expressions.
\begin{align}
    \mathcal{F}_1 & \left ( \mathbf{Y}, \mathbf{X} \right )  = \mathbf{Y} \mathbf{X} \mathbf{X}^H \mathbf{Y}^H + \sigma_t^2 \mathcal{D} \left ( \mathbf{Y} \mathbf{X} \mathbf{X}^H \mathbf{Y}^H \right ) \nonumber \\
    & \quad + \sigma_r^2 \mathbf{Y} \mathcal{D} \left ( \mathbf{X} \mathbf{X}^H \right ) \mathbf{Y}^H + \sigma_r^2 \sigma_t^2 \mathcal{D} \left (  \mathbf{Y} \mathcal{D} \left ( \mathbf{X} \mathbf{X}^H \right ) \mathbf{Y}^H \right )
\end{align}
\begin{align}
    \mathcal{F}_2 & \left ( \mathbf{Z}, \mathbf{Y}, \mathbf{X} \right ) = tr \large ( \mathbf{Z}^H\mathbf{Y} \left ( \mathbf{X} \mathbf{X}^H + \sigma_t^2 \mathcal{D} \left (  \mathbf{X} \mathbf{X}^H \right ) \right ) \mathbf{Y}^H \mathbf{Z} \nonumber \\
    & \quad + \sigma_r^2 \mathbf{Z}^H \mathcal{D} \left (\mathbf{Y} \left ( \mathbf{X} \mathbf{X}^H + \sigma_t^2 \mathcal{D} \left (  \mathbf{X} \mathbf{X}^H \right ) \right ) \mathbf{Y}^H \right ) \mathbf{Z}  \large )
\end{align}

\ifCLASSOPTIONcaptionsoff
  \newpage
\fi


\vfill

\end{document}